
\documentclass[11pt, a4paper]{amsart}
\usepackage[dvips]{epsfig}
\usepackage{amsmath}
\usepackage{amssymb}
\usepackage{amsfonts}
\usepackage{amsthm}
\usepackage{amsbsy}
\usepackage{amsgen}
\usepackage{amscd}
\usepackage{amsopn}
\usepackage{amstext}
\usepackage{amsxtra}
\usepackage{enumerate}
\usepackage{dsfont}
\usepackage{hyperref}

\setcounter{MaxMatrixCols}{10}

\newtheorem{theorem}{Theorem}[section]
\newtheorem{proposition}[theorem]{Proposition}
\newtheorem{lemma}[theorem]{Lemma}
\newtheorem{corollary}[theorem]{Corollary}
\theoremstyle{definition}

\newtheorem{remark}[theorem]{Remark}

\begin{document}
\title{On Nonlinear Functionals of Random Spherical Eigenfunctions}
\author{Domenico Marinucci}
\thanks{Research of D.M. is supported by the ERC Grant 277742 \emph{Pascal}.
}
\address{University of Rome Tor Vergata, Rome, Italy}
\author{Igor Wigman}
\thanks{Research of I.W. is supported by an EPSRC grant under the First
Grant Scheme.}
\address{King's College London, UK}
\date{\today }
\maketitle

\begin{abstract}
We prove Central Limit Theorems and Stein-like bounds for the asymptotic
behaviour of nonlinear functionals of spherical Gaussian eigenfunctions. Our
investigation combine asymptotic analysis of higher order moments for
Legendre polynomials and, in addition, recent results on Malliavin calculus
and Total Variation bounds for Gaussian subordinated fields. We discuss
application to geometric functionals like the Defect and invariant
statistics, e.g. polyspectra of isotropic spherical random fields. Both of
these have relevance for applications, especially in an astrophysical
environment.

\begin{itemize}
\item \textbf{Keywords and Phrases: }Gaussian Eigenfunctions,High Energy
Asymptotics, Central Limit Theorem, Total Variation Bounds

\item \textbf{AMS Classification: }60G60; 42C10, 60D05, 60B10

\item \textbf{PACS: }02.50Ey, 02.30Nw, 02.30Px
\end{itemize}
\end{abstract}

\section{Introduction}

\subsection{Motivation}

Much effort has been recently devoted to the analysis of spherical Gaussian
eigenfunctions (to be defined below). These random fields are the
Fourier components in spectral representation expansions for general
spherical Gaussian fields, see for instance \cite{adlertaylor}, \cite{leosa}%
, \cite{MaPeCUP}, and \cite{leonenko2}, \cite{mal} for extensions; in view
of this, their study is of obvious relevance in connection with the
statistical analysis of spherical data. Namely, the analysis of these
components and their polynomial transforms (the so-called \emph{polyspectra}%
) is now one of the leading themes in modern Cosmology, in particular in the
growing area of Cosmic Microwave Background (CMB) data analysis: we refer
for instance to \cite{bartolo}, \cite{durrer}, \cite{lewis}, \cite{MaPeCUP}
and the references therein.

In short and somewhat vague terms, CMB represents a relic electromagnetic
radiation from the so-called \emph{age of recombination}, e.g. the
cosmological epoch when free electrons and protons combined to form stable
hydrogen atoms; this is now reckoned to have occurred some $3.7\times 10^{5}$
years after the Big Bang, i.e., some $1.3\times 10^{10}$ years ago. As such,
CMB\ radiation is universally recognized as a goldmine of information on
primordial epochs, and its analysis has drawn enormous theoretical and
experimental efforts: we refer for instance to http://map.gsfc.nasa.gov/ and
http://www.rssd.esa.int/Planck for the most relevant ongoing experimental
activity.

Among many, one of the leading current themes is the analysis of
non-Gaussianity of CMB data; for this purpose much effort has been dedicated
to the investigation of the properties of the polyspectra mentioned above,
e.g. polynomial transforms of sample spherical Fourier components. In this
paper we provide a full characterization for the asymptotic behaviour of
these transforms, proving Central Limit Theorem results under rather broad
assumptions.

\vspace{2mm} From a different perspective, the analysis of high frequency
(or high energy) Laplace eigenfunctions is a classical subject in Analysis
and Mathematical Physics with much progress in understanding, both
rigorously and conjecturally, in the recent years. According to Berry's
universality conjecture ~\cite{Berry 1977}, the \emph{deterministic} high
energy eigenfunctions on "generic" manifolds are represented by \emph{random}
monochromatic waves, an equivalent planar model to our spherical Gaussian
eigenfunctions.

One in particular interesting aspect of those is their \emph{nodal structure}
~\cite{BGS} (e.g. the zero set and the number of its connected components or
ovals), especially in light of their conjectural relations to percolation
and SLE ~\cite{BS}. Some results on the geometry of the nodal structure
include studying the number of the ovals (equivalently the number of \emph{%
nodal components} i.e. connected components of the zero set complement) ~%
\cite{NS}, and the total length of the nodal line ~\cite{Wig}. The so-called
Defect (i.e. the difference between ``hot" and ``cold" regions for the
eigenfunctions, see below or Section \ref{sec:defect}) of spherical Gaussian
eigenfunctions was addressed in our earlier work \cite{MaWi2}.

\subsection{Statement of the main results}

Let $\Delta _{\mathbb{S}^{2}}$ denote the usual Laplacian on the $2$%
-dimensional unit sphere $\mathcal{S}^{2}$, and consider the Gaussian random
spherical eigenfunctions $f_{l}$, $l\in\mathbb{Z}_{>0}$, satisfying
\begin{equation*}
\Delta _{\mathbb{S}^{2}}f_{l}=-l(l+1)f_{l}\text{ , }\mathbb{E}\left[ f_{l}(x)
\right] =0\text{ , }\mathbb{E}\left[ f_{l}(x)^{2}\right] =1\text{.}
\end{equation*}%
It is well known that $f_{l}$ can be expanded into the orthonormal basis of
spherical harmonics $\left\{ Y_{lm}\right\} $%
\begin{equation*}
f_{l}=\sum_{m=-l}^{l}a_{lm}Y_{lm}\text{ , }\mathbb{E}\left[ a_{lm}\overline{a%
}_{lm^{\prime }}\right] =\frac{4\pi }{2l+1}\delta _{m}^{m^{\prime }},
\end{equation*}
(where the coefficients $a_{lm}$ Gaussian i.i.d.). The random field $f_{l}$
is isotropic or rotation invariant, meaning that for any rotation $g\in
SO(3) $ the distribution of $f_{l}(\cdot)$ is equal to the distribution of $%
f_{l}(g\,\cdot)$ (equivalently, for every $k\in \mathbb{N}$ and $%
x_{1},\ldots ,x_{k}\in \mathbb{S}^{2}$, the distribution of the random
vector $\left( f_{l}(x_{1}),\ldots ,f_{l}(x_{k})\right) $ equals to the
distribution of $\left( f_{l}(g\cdot x_{1}),\ldots ,f_{l}(g\cdot
x_{k})\right) $). Equivalently, $f_{l}$ is centred Gaussian, with the
covariance function
\begin{equation*}
r_{l}(x,y):=\mathbb{E}[f_{l}(x)\cdot f_{l}(y)]=P_{l}(\left\langle
x,y\right\rangle )\text{ , }x,y\in S^{2},
\end{equation*}%
where $\left\langle \cdot,\cdot\right\rangle $ is the Euclidean inner product, so
that $\vartheta (x,y):=\arccos (\left\langle x,y\right\rangle )$ is the
angular distance; here, $P_{l}:[-1,1]\rightarrow \mathbb{R}$ are the usual
Legendre polynomials, see the Appendix for more details.

The purpose of this paper is to study the asymptotic behaviour of random
quantities such as
\begin{equation*}
h_{l;q}=\int_{S^{2}}H_{q}(f_{l}(x))dx\text{ ,}
\end{equation*}%
where $H_{q}(.)$ are the Hermite polynomials satisfying the differential
equations%
\begin{equation*}
H_{m}^{\prime }(x)=mH_{m-1}(x)\text{ , }\mathbb{E}\left[ H_{m}(Z)\right] =0%
\text{ , }Z\sim \mathcal{N}(0,1)\text{ .}
\end{equation*}%
As mentioned above, these nonlinear transforms are of interest by themselves
as statistical functionals, in connections to the analysis of angular
polyspectra of spherical random fields (more details to be provided below).
Importantly, these statistics are the basic building blocks for the analysis
of general nonlinear functionals of Gaussian eigenfunctions. More precisely,
it is a well-known general fact (see for instance \cite{peccatitaqqu}) that
the $L^{2}$ space of square-integrable nonlinear transforms of Gaussian
eigenfunctions can be expanded (in the $L^{2}$ sense) as%
\begin{equation}
G(f_{l})=\sum_{q=0}^{\infty }\frac{J_{q}(G)}{q!}H_{q}(f_{l})\text{ , }%
\mathbb{E}\left[ G^{2}(f_{l})\right] <\infty \text{ , }J_{q}(G):=\mathbb{E}%
\left[ G(f_{l})H_{q}(f_{l})\right] \text{ .}  \label{L2exp}
\end{equation}%
As a consequence, the analysis of averaged statistics of the form
\begin{equation*}
\mathcal{G }(f_{l})=\int_{S^{2}}G(f_{l})dx
\end{equation*}
for ``generic" $G$ will directly follow from Central Limit Theorem results on $\left\{
h_{l;q}\right\} .$ To establish the latter, we first need to investigate the
asymptotic behaviour, as $l\rightarrow \infty ,$ for the variances $
Var(h_{l;q}).$ Note that when both $q$ and $l$ are odd, the statistics
$\left\{ h_{l;q}\right\} $ are identically zero for the symmetry properties
of Legendre polynomials, e.g.%
\begin{equation*}
P_{l}(x)=(-1)^{l}P_{l}(-x),
\end{equation*}%
whence integrals of odd polynomials over the sphere are identically zero. To
simplify the discussion, throughout the sequel we shall consider all limits
only for even multipoles $l.$ Under this condition, the asymptotic behaviour
of these variances was investigated in \cite{MaWi2}, where it was shown that,
for $q=3$ and $q\geq 5$
\begin{eqnarray*}
Var(h_{q;l}) &=&(4\pi )^{2}q!\int_{0}^{\pi /2}P_{l}^{q}(\cos \vartheta )\sin
\vartheta d\vartheta \sim (4\pi )^{2}q!\frac{c_{q}}{l^{2}}\text{ ,} \\
c_{q} &=&\int_{0}^{\infty }\psi J_{0}(\psi )^{q}d\psi \geq 0\text{ .}
\end{eqnarray*}%
For $q=2,4,$ the order of magnitude of the corresponding variances is larger
(see Lemma \ref{lem:varhl4=4thmoment=logl/l^2}):
\begin{equation*}
Var(h_{q;l})\approx \left\{
\begin{array}{c}
\frac{1}{l}\text{ , for }q=2 \\
\frac{\log l}{l^{2}}\text{ , for }q=4%
\end{array}%
\right. \text{ .}
\end{equation*}%
The constants $c_{q}$ are immediately seen to be strictly positive for all
even values of $q.$ For odd values, we conjecture this to be always the
case; a formal proof is left for future research. Therefore, the statement
of our main result will entail this condition explicitly:

\begin{theorem}
\label{thm:hlq asymp gauss} For all $q$ such that $c_{q}>0,$ we have
\begin{equation*}
\frac{h_{2l;q}}{\sqrt{Var(h_{2l;q})}}\rightarrow _{d}\mathcal{N}(0,1)\text{, as }l\rightarrow \infty \text{ .}
\end{equation*}%
\bigskip
\end{theorem}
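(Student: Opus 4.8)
The plan is to use the fact that, for each fixed (even) multipole, the random variable $h_{l;q}$ lives in a single Wiener chaos, and then to invoke the Fourth Moment Theorem together with its quantitative Malliavin--Stein counterpart (see \cite{peccatitaqqu}). Representing the field as $f_l(x)=\langle T(x),W\rangle$ for a suitable isonormal Gaussian process $W$ and representing map $T$, so that $\langle T(x),T(y)\rangle=r_l(x,y)=P_l(\langle x,y\rangle)$, one has $H_q(f_l(x))=I_q(T(x)^{\otimes q})$ and hence $h_{l;q}=I_q(g_l)$ with $g_l=\int_{S^2}T(x)^{\otimes q}\,dx$. In particular $h_{l;q}$ belongs to the $q$-th chaos and $\mathrm{Var}(h_{l;q})=q!\,\|g_l\|^{2}=(4\pi)^2 q!\int_0^{\pi/2}P_l(\cos\vartheta)^q\sin\vartheta\,d\vartheta$, which is positive precisely under the hypothesis $c_q>0$. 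Setting $\tilde h_{2l;q}:=h_{2l;q}/\sqrt{\mathrm{Var}(h_{2l;q})}$, the convergence to $\mathcal N(0,1)$ will follow once the relevant contraction norms are shown to be negligible relative to the variance.

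Concretely, the Nourdin--Peccati bound controls the total variation distance by
\begin{equation*}
d_{TV}\!\left(\tilde h_{2l;q},\,\mathcal N(0,1)\right)\le C_q\sum_{r=1}^{q-1}\frac{\|g_{2l}\otimes_r g_{2l}\|}{\mathrm{Var}(h_{2l;q})},
\end{equation*}
and equivalently it is enough to prove that the fourth cumulant of the normalised variable vanishes, i.e.\ $\E[\tilde h_{2l;q}^4]\to 3$. The second step is therefore to expand the fourth moment via the diagram (Wick) formula: separating the disconnected pairings, which reproduce exactly $3\,\mathrm{Var}(h_{2l;q})^2$, one is left with a finite sum over connected diagrams on four vertices, each contributing a constant multiple of
\begin{equation*}
\int_{(S^2)^4}\ \prod_{1\le i<j\le 4}P_{2l}(\langle x_i,x_j\rangle)^{p_{ij}}\,\prod_{i=1}^4 dx_i,\qquad \sum_{j\ne i}p_{ij}=q,
\end{equation*}
connectedness forbidding the pure product terms. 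These integrals are precisely the contraction norms $\|g_{2l}\otimes_r g_{2l}\|^2$ appearing in the bound above, so the two formulations coincide.

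The analytic heart of the argument is the third step: each connected-diagram integral must be shown to be $o\!\big(\mathrm{Var}(h_{2l;q})^2\big)$. Using isotropy to fix one variable, this reduces to triple integrals of products of powers of $P_{2l}$ along the edges of a connected graph, which I would control by combining the asymptotics of the one-dimensional moments $\int_{-1}^1|P_l(t)|^p\,dt$ --- governed by Hilb's asymptotic in the bulk and by the Bessel approximation $P_l(\cos\vartheta)\approx J_0\big((l+\tfrac12)\vartheta\big)$ near the poles --- with a convolution/Young-type estimate that decouples the cyclic factors. The transition at $p=4$ in these moments is exactly what produces the distinct variance orders $l^{-1}$, $\log l/l^2$ and $l^{-2}$ for $q=2$, $q=4$ and $q=3,\,q\ge5$ respectively, so the target rate against which the diagrams are compared changes with $q$.

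The main obstacle I expect is precisely this last step, and most acutely the borderline case $q=4$: there the variance carries the extra logarithmic factor, the single-variable fourth moment only barely fails to be dominated by the diagonal, and one must track the logarithmic gains and losses carefully through the multi-point integrals to confirm that every connected diagram is genuinely of smaller order than $(\log l)^2/l^4$. Controlling the mixed diagrams, in which the edge multiplicities $p_{ij}$ straddle the critical value $4$, will require the sharpest available Legendre estimates rather than crude absolute-value bounds, since sign cancellation --- already responsible for the smallness of $\int P_l^q$ at odd $q$ --- is what keeps several of these contributions below the variance scale.
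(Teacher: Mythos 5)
Your reduction coincides exactly with the paper's: you place $h_{l;q}$ in the $q$-th Wiener chaos, invoke the Nourdin--Peccati/fourth-moment bound (the paper's \eqref{noupebou2}), expand the fourth cumulant by the diagram formula, and are left with showing that every connected-diagram integral is $o\left( Var^{2}(h_{l;q})\right) $. Up to that point there is nothing to object to. The genuine gap is in what you yourself call the analytic heart, which is where all of the paper's actual work lies (Proposition \ref{prop:cums are small} for $q\geq 5$, Lemma \ref{lem:cum bnd q=4} for $q=4$): the mechanism you propose --- one-dimensional moments of $P_{l}$ combined with a convolution/Young-type decoupling of the cyclic factors --- cannot deliver the required bounds, and you never actually carry it out.

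Two concrete failures. First, take $q\geq 5$ and the connected diagram $\underline{\eta }=(q-1,q-1,1,1,0,0)$ (the case \eqref{eq:eta=441100} when $q=5$). Decoupling the cycle by Cauchy--Schwarz/Hilbert--Schmidt norms bounds the integral by
\begin{equation*}
\left( \int_{(\mathbb{S}^{2})^{2}}P_{l}(\langle x,y\rangle
)^{2(q-1)}dxdy\right) \left( \int_{(\mathbb{S}^{2})^{2}}P_{l}(\langle
x,y\rangle )^{2}dxdy\right) \approx \frac{1}{l^{2}}\cdot \frac{1}{l}=\frac{1
}{l^{3}},
\end{equation*}
an order of magnitude above the target $o(l^{-4})$; even exploiting that the kernel $P_{l}(\langle x,y\rangle )$ is $\frac{4\pi }{2l+1}$ times a projection only improves this to $O(l^{-4})$, still short of $o(l^{-4})$. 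What closes the gap in the paper is not a smarter pairing but the local/global decomposition: on the region where all angles exceed $\varepsilon $ the pointwise bound \eqref{eq:Legendre decay sqrt} strips excess powers \emph{before} Cauchy--Schwarz is applied to disjoint edges, while the complementary region has measure $O(\varepsilon ^{2})$; optimizing over $\varepsilon $ yields Lemmas \ref{lem:glob contribution} and \ref{lem:loc contribution}. You cite the Hilb/Bessel asymptotics only to evaluate one-dimensional moments; this splitting device is absent from your plan. Second, for $q=4$ the dominant connected diagram is the $4$-cycle with all multiplicities equal to $2$. Its integrand is nonnegative, so sign cancellation --- which you invoke as the saving mechanism --- is unavailable there, and any Cauchy--Schwarz/Young decoupling gives exactly $\left( \int_{0}^{1}P_{l}(t)^{4}dt\right) ^{2}\approx (\log l)^{2}/l^{4}$, the same order as $Var^{2}(h_{l;4})$, with zero gain. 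The paper's Lemma \ref{lem:cum bnd q=4} obtains $cum_{4}(h_{l;4})\approx l^{-4}$, a gain of $(\log l)^{2}$, by an exact Gaunt/Clebsch--Gordan computation: the Hilbert--Schmidt mass of the kernel $P_{l}(\langle x,y\rangle )^{2}$ is spread over roughly $l$ eigenspaces, so the sum over $L$ of the eighth powers of the Wigner $3j$ symbols converges without the logarithm, whereas the square of the sum of the fourth powers carries $(\log l)^{2}$. That spectral input cannot be recovered from one-dimensional moment asymptotics; your proposal correctly identifies the borderline case but leaves the theorem unproved precisely there.
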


Theorem \ref{thm:hlq asymp gauss} is a building block for a more general
claim: under minimal regularity conditions (i.e. the existence of at least
one nonzero coefficient $J_{q}(G)$ corresponding to a nonzero term $c_{q}>0$%
) we shall have a CLT for square integrable nonlinear functionals of
spherical Gaussian eigenfunctions, i.e. for any $\mathcal{G}(f_{l})=\int_{%
\mathbb{S}^{2}}G(f_{l})dx$ such that $\mathbb{E}\mathcal{G}%
^{2}(f_{l})=:\sigma _{G}^{2}(l)<\infty $,
\begin{equation*}
\frac{\mathcal{G}(f_{2l})-\mathbb{E}\left[ \mathcal{G}(f_{2l})\right] }{%
\sigma _{G}(2l)}\rightarrow _{d}\mathcal{N}(0,1)\text{, as }l\rightarrow
\infty .
\end{equation*}

In fact we will prove the following stronger result: for $q\neq 4$,
\begin{equation}
d_{TV}\left( \frac{h_{2l;q}}{\sqrt{Var(h_{2l;q})}},\mathcal{N}(0,1)\right)
=O\left( l^{-\delta _{q}}\right)  \label{bound1}
\end{equation}%
for some $\delta _{q}>0$, where $d_{TV}(.,.)$ denotes the usual total
variation distance of random variables
\begin{equation*}
d_{TV}(X,Y)=\sup_{A\in \mathcal{B}(\mathbb{R})}\left\vert \Pr (X\in A)-\Pr
(Y\in A)\right\vert ,
\end{equation*}%
with the Borel $\sigma $-field $\mathcal{B}(\mathbb{R})$. For $q=4$ the rate
of convergence \eqref{bound1} is of slower \emph{logarithmic} order.

\subsection{On the proofs of the main results and outline of the paper}

The ideas behind our main argument can be summarized as follows. Because $%
f_{l}(\cdot)$ is a Gaussian field, for any fixed $x\in \mathbb{S}^{2}$,
$H_{q}(f_{l}(x))$
belongs to the so-called $q$-th order Wiener chaos generated by the
Gaussian measure governing $f_{l}(\cdot)$ (see \cite{peccatitaqqu}), and so does
any linear transform, including $h_{l;q}.$ As a consequence of the
Nourdin-Peccati Theorem for Stein-Malliavin approximations of Gaussian
subordinated random variables (see for instance \cite{noupebook}, Theorem
5.26), the following bound holds for each even $l,$ $q\geq 2:$%
\begin{equation}
d_{TV}\left( \frac{h_{l;q}}{\sqrt{Var(h_{l;q})}},\mathcal{N}(0,1)\right)
\leq 2\sqrt{\frac{q-1}{3q}\left( \frac{cum_{4}\left( h_{l;q}\right) }{%
Var^{2}(h_{l;q})}\right) }\text{ ;}  \label{noupebou2}
\end{equation}
here, $cum_{4}\left( Y\right) $ is the $4$th order cumulant of $Y,$ see \cite%
{noupebook}, \cite{peccatitaqqu} for more discussion on these points.

The latter bound shows that if we prove that
\begin{equation}  \label{eq:cum4=o(var^2)}
cum_{4}\left( h_{l;q}\right) = o_{l\rightarrow\infty} \left(
Var^{2}(h_{l;q})\right),
\end{equation}
the Central Limit Theorem for $h_{l;q}$ (where $q$ is fixed and $%
l\rightarrow\infty$) will follow. The bound \eqref{eq:cum4=o(var^2)} for $%
q\ge 5$ is proved along Section \ref{q>=5}. Here we first express the $4$th
order cumulant as an integral over $(\mathcal{S}^{2})^{4}$, using the
well-known \emph{Diagram Formula} (see Section \ref{sec:Diagram Formula}).
This will allow us to obtain the desired bound \eqref{eq:cum4=o(var^2)} via
a tricky multiple application of the Cauchy-Schwartz inequality (Proposition %
\ref{prop:cums are small}, whose proof in Section \ref{sec:cum small proof}
takes on most of Section \ref{q>=5}); to this end we will divide the domain
of integration into the ``local" and ``global" ones (for the definitions of
various ranges see Section \ref{sec:Loc Glob}). A more detailed explanation
of the proof of Proposition \ref{prop:cums are small} may be found in
Section \ref{sec:on proof of cum bnd}.

For $q=3,4$ proving the bound \eqref{eq:cum4=o(var^2)} will require special
arguments, presented in Section \ref{q=3,4}. While the case $q=3$ was
already covered earlier in \cite{M2008}, for $q=4$ the asymptotic analysis
requires the evaluation of so-called Gaunt integrals, connecting moments of
Legendre polynomials to Wigner's and Clebsch-Gordan coefficients (see \cite%
{MaPeCUP}, \cite{VMK}).

Various applications of Theorem \ref{thm:hlq asymp gauss} for the Defect
statistics and general polyspectra are discussed in Section \ref%
{applications}. The basic idea is to use the Central Limit Theorem for $%
h_{l;q}$ to establish asymptotic Gaussianity of \emph{finite-order} generic
polynomial sequences, and then exploit the expansion \eqref{L2exp} of
arbitrary functionals (e.g. the Defect) to prove that the distribution of
any such functional can be asymptotically approximated by means of a
finite-order expansion.

\subsubsection{Some conventions}

In this manuscript, given any two positive sequences $a_{n},b_{n}$ we shall
write $a_{n}\approx b_{n}$ if there exist two positive constants $%
c_{1},c_{2} $ such that $c_{1}a_{n}\leq b_{n}\leq c_{2}a_{n},$ for all $%
n=1,2,...,$ and $a_{n}\sim b_{n}$ if $\lim_{n\rightarrow \infty
}a_{n}/b_{n}=1.$ Also, we shall write $a_{n}\ll b_{n}$ or $a_{n}=O(b_{n})$
when the sequence $a_{n}/b_{n}$ is asymptotically bounded; we write $\mu
(dx) $ for the usual Lebesgue measure on the unit sphere, so that $\int_{%
\mathbb{S}^{2}}\mu (dx)=4\pi$.

\subsection{Acknowledgements}

The authors would like to thank Giovanni Peccati, Ze\'{e}v Rudnick and
Mikhail Sodin for many stimulating and fruitful discussions about the
subjects raised in the present manuscript. A substantial part of this
research was done during the second author's visit to University of Rome
\textquotedblleft Tor Vergata", and he would like to acknowledge the
exemplary hospitality of the institution, and the generous financial support.

\section{The central limit theorem for $h_{l;q}$, $q\geq 5$ \label{q>=5}}

\subsection{Some preliminaries}

\label{sec:Diagram Formula}

We shall now focus on fourth order cumulants for Hermite transforms for
arbitrary $q\geq 5.$ First we need to recall some well-known background
material on the so-called Diagram Formula (see for instance \cite%
{Sodin&Tsirelson}, \cite{peccatitaqqu}, \cite{MaPeCUP} or \cite%
{nourdinpeccati} for recent textbook references).

Fix a set of integers $\alpha _{1},...,\alpha _{p}.$ A diagram is a graph
with $(\alpha _{1}+...+\alpha _{p})$ vertexes labelled by $1,...,p,$ ($%
\alpha _{1}$ vertexes are labelled by $1,$ $\alpha _{2}$ vertexes are
labelled by 2...) such that each vertex has degree 1, i.e. the edges have no
common endpoints. We can view the vertexes as belonging to $p$ different
rows and the edges may connect only vertexes with different labels, i.e.
there are no flat edges on the same row. The set of all such graphs $\gamma $
is denoted by $\Gamma (\alpha _{1},...,\alpha _{p});$ we write $\Gamma
_{C}(\alpha _{1},...,\alpha _{p})$ for graphs that are connected, i.e. it is
not possible to partition the vertexes into two subsets $A$ and $B$ such
that no edge connects a vertex in $A$ with one in $B.$

Given a diagram $\gamma$, let
\begin{equation*}
\underline{\eta}(\gamma)=(\eta _{ij}(\gamma ))\in \mathbb{Z}^{\binom{p}{2}}
\end{equation*}
be the vector whose $\binom{p}{2}$ elements $\eta_{ij}(\gamma)$ ($i < j$)
are the number of edges between $i$ and $j$ in the graph $\gamma$. The
vector $\underline{\eta}$ satisfies
\begin{equation}  \label{eq:sum etaij=2q}
\sum_{i,j}\eta _{ij}=2q,
\end{equation}
and, moreover, the multiplicities of opposite edges equal, as the following
lemma states.

\begin{lemma}
\label{lem:etaij=etai'j'} Let $\gamma \in \Gamma _{C}(q,q,q,q)$ with
arbitrary $q\geq 1$, and $\underline{\eta }=\underline{\eta }(\gamma )$. Let
$e=(i,j)$ any edge in $\gamma $ and $e^{\prime
}=(i^{\prime },j^{\prime })$ the unique edge with vertexes disjoint with
$e$, so that $\{i,j,i^{\prime },j^{\prime }\}=\{1,2,3,4\}$. Then
$\eta _{e}=\eta _{e^{\prime }}$.
\end{lemma}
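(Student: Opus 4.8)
The plan is to reduce the statement to a small linear system coming from the degree constraints, and then solve it by taking signed combinations of the equations. The only structural input needed is that every vertex has degree exactly $1$ and that there are no flat edges: thus each of the $q$ vertices of row $i$ is the endpoint of exactly one edge, and that edge necessarily runs to a different row. Hence the total number of edges incident to row $i$ equals $q$, which gives the four \emph{balance equations}
\begin{equation*}
\sum_{j\neq i}\eta_{ij}=q,\qquad i=1,2,3,4,
\end{equation*}
with the convention $\eta_{ij}=\eta_{ji}$. Writing these out explicitly for the six unknowns $\eta_{12},\eta_{13},\eta_{14},\eta_{23},\eta_{24},\eta_{34}$ is the first step.

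Next I would observe that the three pairs of vertex-disjoint (``opposite'') row-pairs are exactly $\{12,34\}$, $\{13,24\}$ and $\{14,23\}$, so the assertion $\eta_e=\eta_{e'}$ amounts to the three identities $\eta_{12}=\eta_{34}$, $\eta_{13}=\eta_{24}$ and $\eta_{14}=\eta_{23}$. Each of these follows from a single signed sum of the four balance equations. For instance, forming (row $1$) $-$ (row $2$) $+$ (row $3$) $-$ (row $4$) makes every term on the left cancel except $2\eta_{13}-2\eta_{24}$, while the right-hand side telescopes to $q-q+q-q=0$, yielding $\eta_{13}=\eta_{24}$. The combination $(1)-(2)-(3)+(4)$ isolates $2\eta_{14}-2\eta_{23}$ and gives $\eta_{14}=\eta_{23}$, and $(1)+(2)-(3)-(4)$ isolates $2\eta_{12}-2\eta_{34}$ and gives $\eta_{12}=\eta_{34}$. (Alternatively, once any two of the identities are established, the third drops out of a single balance equation.)

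There is essentially no hard step here: the argument is pure linear algebra over the integers, and I do not expect a genuine obstacle. The only point requiring care is the justification of the balance equations, namely correctly converting ``degree $1$ and no flat edges'' into $\sum_{j\neq i}\eta_{ij}=q$; one must verify that each of the $q$ vertices of a given row contributes exactly one unit to the count of incident edges and that all these edges are directed to other rows. I would also remark that connectedness of $\gamma$ is in fact nowhere used — the three identities hold for every $\gamma\in\Gamma(q,q,q,q)$ — so the hypothesis $\gamma\in\Gamma_C$ could be relaxed, though stating it does no harm since it is precisely the case relevant to the cumulant computation.
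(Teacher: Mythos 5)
Your proof is correct, and it takes a genuinely different route from the paper's. The paper disposes of the lemma in one sentence: after permuting the vertices within each row, one may assume that every edge joins vertices lying in the same column; since a column then consists of two disjoint edges covering all four rows, opposite multiplicities must agree. The catch is that the existence of such a column-aligned reordering is essentially \emph{equivalent} to the three identities $\eta_{12}=\eta_{34}$, $\eta_{13}=\eta_{24}$, $\eta_{14}=\eta_{23}$ (to build the columns one must pair each $12$-edge with a $34$-edge, each $13$-edge with a $24$-edge, and each $14$-edge with a $23$-edge), so verifying the paper's normalization claim amounts to exactly the kind of counting you carry out. Your argument — the four balance equations
\begin{equation*}
\eta_{12}+\eta_{13}+\eta_{14}=q,\quad \eta_{12}+\eta_{23}+\eta_{24}=q,\quad \eta_{13}+\eta_{23}+\eta_{34}=q,\quad \eta_{14}+\eta_{24}+\eta_{34}=q,
\end{equation*}
obtained from ``degree one, no flat edges,'' combined with the signed sums $(1)-(2)+(3)-(4)$, $(1)-(2)-(3)+(4)$, $(1)+(2)-(3)-(4)$ — is more elementary and leaves nothing implicit. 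You are also right that connectedness is never used: the identities hold for all $\gamma\in\Gamma(q,q,q,q)$, and the hypothesis $\gamma\in\Gamma_C$ appears in the lemma only because connected diagrams are the ones arising in the cumulant formula. What the paper's picture buys is brevity and a geometric intuition (edges organized into $q$ ``vertical'' pairs); what yours buys is an airtight derivation whose only input is the definition of a diagram.
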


\begin{proof}
The statement of the lemma follows immediately from the fact that, by
reordering the vertexes corresponding to the same label (i.e. in the same
row) if necessary, we may assume that all the edges are between vertexes in
the same column.
\end{proof}

\subsection{Cumulants of $h_{l;q}$}

With notation as above, the well-known Diagram Formula (see e.g.~\cite%
{peccatitaqqu}) provides the following neat expression for the cumulants of
our statistics $\left\{ h_{l;q}\right\} ,$ namely
\begin{equation*}
cum\left\{ h_{l;q_{1}},...,h_{l;q_{p}}\right\} =cum\left\{
\int_{S^{2}}H_{q_{1}}(f_{l}(x_{1}))dx_{1},...,%
\int_{S^{2}}H_{q_{p}}(f_{l}(x_{p}))dx_{p}\right\}
\end{equation*}%
\begin{eqnarray*}
&=&\int\limits_{(S^{2})^{p}}cum\left\{
H_{q_{1}}(f_{l}(x_{1})),...,H_{q_{p}}(f_{l}(x_{p}))\right\} dx_{1}...dx_{p}
\\
&=&\int\limits_{(S^{2})^{p}}\sum\limits_{\gamma \in \Gamma
_{C}(q_{1},...,q_{p})}\prod\limits_{(i,j)\in \gamma} \mathbb{E}%
\left\{ f_{l}(x_{i})f_{l}(x_{j})\right\} dx_{1}...dx_{p} \\
&=&\sum\limits_{\gamma \in \Gamma
_{C}(q_{1},...,q_{p})}\int\limits_{(S^{2})^{p}}\prod\limits_{(i,j)}\left(
\mathbb{E}\left\{ f_{l}(x_{i})f_{l}(x_{j})\right\} \right) ^{\eta
_{ij}(\gamma )}dx_{1}...dx_{p}\text{ , }
\end{eqnarray*}%
(recall that $\underline{\eta }(\gamma )$ is the vector of multiplicities of
edges in $\gamma $ as above).

We constraint ourselves to the $4$th order cumulants of the form $%
cum(h_{l;q},h_{l;q},h_{l;q},h_{l;q})$; in this case the expression above
simplifies to
\begin{equation}  \label{eq:cum as sum mom}
cum(h_{l;q},h_{l;q},h_{l;q},h_{l;q}) = \sum_{\gamma \in \Gamma
_{C}(q,q,q,q)} M(\underline{\eta}(\gamma)),
\end{equation}
where for a vector $\eta\in \mathbb{Z}_{\ge 0}^{6}$ we set
\begin{equation}  \label{eq:C mom def}
M(\underline{\eta}) = \int\limits_{(S^{2})^{4}} \prod\limits_{i < j}
P_{l}(\langle x_{i},x_{j}\rangle)^{\eta _{ij}}dx,
\end{equation}
with the shortcut $dx=dx_{1}\cdot\ldots\cdot dx_{4}$. Since
\eqref{eq:cum as
sum mom} is a finite summation of the $C(\underline{\eta})$ with $\underline{%
\eta}$ corresponding to a connected diagram $\gamma\in \Gamma _{C}(q,q,q,q)$
it then remains to bound each of terms as above separately.

We know ~\cite{MaWi}, Lemma 5.2, that for $q\geq 5$,
\begin{equation*}
Var(h_{l;q})\sim \frac{c_{q}}{l^{2}}
\end{equation*}%
and aim at proving the bound
\begin{equation*}
cum(h_{l;q},h_{l;q},h_{l;q},h_{l;q})=o\left( \frac{1}{l^{4}}\right) ,
\end{equation*}%
(or stronger); this is sufficient for the central limit theorem by \ref%
{noupebou2}. The following proposition implies this estimate bearing in mind %
\eqref{eq:cum as sum mom}.

\begin{proposition}
\bigskip \label{prop:cums are small} For $q\geq 5$ and $\underline{\eta }=%
\underline{\eta }(\gamma )$ with $\gamma $ a connected diagram, one has
\begin{equation*}
|M(\underline{\eta })|=%
\begin{cases}
O\left( \frac{(\log {l})^{2}}{l^{4\frac{1}{5}}}\right) & q=5 \\
O\left( \frac{(\log {l})^{2}}{l^{4\frac{2}{7}}}\right) & q=6 \\
O\left( \frac{(\log {l})^{3/2}}{l^{4+\frac{q-6}{2q-3}}}\right) & q\geq 7%
\end{cases}%
.
\end{equation*}%
In particular, for every $l\geq 5$,
\begin{equation*}
|M(\underline{\eta })|=o\left( \frac{1}{l^{4}}\right) .
\end{equation*}
\end{proposition}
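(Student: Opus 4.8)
The plan is to estimate each $M(\underline{\eta})$ directly from its integral representation \eqref{eq:C mom def}, exploiting the rigidity that Lemma \ref{lem:etaij=etai'j'} imposes on the exponents. By that lemma the six multiplicities collapse to three, one for each pair of opposite edges of $K_4$: writing $a=\eta_{12}=\eta_{34}$, $b=\eta_{13}=\eta_{24}$, $c=\eta_{14}=\eta_{23}$, identity \eqref{eq:sum etaij=2q} gives $a+b+c=q$, and connectedness of $\gamma$ forces at least two of $a,b,c$ to be strictly positive. Since a permutation of $\{1,2,3,4\}$ permutes these three pairs, up to relabelling I may assume $a\ge b\ge c\ge 0$, so that $M(\underline{\eta})=\int_{(S^2)^4}P_{12}^aP_{34}^aP_{13}^bP_{24}^bP_{14}^cP_{23}^c\,dx$ with $P_{ij}:=P_l(\langle x_i,x_j\rangle)$.

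First I would dispose of the ``global'' regime. On the set where every angular distance $\vartheta_{ij}$ lies in $[\epsilon,\pi-\epsilon]$ for a fixed small $\epsilon>0$, the classical Legendre asymptotics give $|P_l(\cos\vartheta)|\lesssim (l\sin\vartheta)^{-1/2}\lesssim_\epsilon l^{-1/2}$, so the integrand is $\lesssim l^{-(a+b+c)}=l^{-q}$ and this part of $M(\underline{\eta})$ is $O(l^{-q})=O(l^{-5})=o(l^{-4})$. All the difficulty therefore concentrates in the ``local'' regions, where one or more of the points nearly coincide or become antipodal and $|P_{ij}|$ is of order $1$.

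To control the local contributions I would reduce the four-fold integral to one-dimensional moment integrals $\mu_p:=\int_0^\pi |P_l(\cos\vartheta)|^p\sin\vartheta\,d\vartheta$ by a chain of Cauchy--Schwarz inequalities applied one variable at a time, splitting each edge exponent between the two resulting copies of the integral. The key analytic input, which already encodes the local/global dichotomy in a single variable, is the high-energy behaviour of $\mu_p$: combining the near-pole Hilb approximation $P_l(\cos\vartheta)\approx J_0((l+\tfrac12)\vartheta)$ with the bulk decay $(l\sin\vartheta)^{-1/2}$ yields $\mu_p\approx l^{-p/2}$ for $2\le p<4$, $\mu_4\approx (\log l)/l^2$, and $\mu_p\approx l^{-2}$ for $p>4$. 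The logarithmic borderline at $p=4$ is precisely what produces the $(\log l)^2$ and $(\log l)^{3/2}$ factors in the statement. When $c=0$ the diagram is a $4$-cycle and the reduction is especially clean: once $x_1,x_4$ are fixed the integrals over $x_2$ and $x_3$ decouple, so that $|M(\underline{\eta})|\le 4\pi\,\|u*v\|_{L^2(S^2)}^2$ with $u=|P_l|^a$, $v=|P_l|^b$ viewed as zonal kernels, and Young's convolution inequality on the sphere turns this into a product of powers of the $\mu_p$'s that sits comfortably below $l^{-4}$.

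The main obstacle is the genuinely connected diagrams with $c\ge 1$ and small individual exponents — for $q=5$ the cases $(a,b,c)=(2,2,1)$ and $(3,1,1)$, and their analogues for larger $q$ — in which all three pairs of edges are present and the low powers make the one-variable decay slow. Here the clean factorization fails, because $P_{14}^c$ and $P_{23}^c$ couple $x_2$ to $x_3$; I would break this coupling either by $|P_{23}|^c\le 1$ combined with the AM--GM split $|P_{14}|^c|P_{23}|^c\le\tfrac12(|P_{14}|^{2c}+|P_{23}|^{2c})$, or by an additional Cauchy--Schwarz symmetrizing the two offending edges, in each case returning to cyclic-type integrals amenable to the convolution bound above. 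A crude estimate at this stage yields only $O((\log l)/l^4)$, which is \emph{not} $o(l^{-4})$; the decisive extra saving must come from using the decay of $u*v$ in the separation of its arguments (equivalently, Young's inequality) rather than a pointwise supremum bound. Optimising the Cauchy--Schwarz/Young exponents over the admissible splittings is exactly what produces the precise thresholds $4\tfrac15$, $4\tfrac27$ and $4+\tfrac{q-6}{2q-3}$; since each exceeds $4$, the conclusion $|M(\underline{\eta})|=o(l^{-4})$ follows, and because \eqref{eq:cum as sum mom} is a finite sum of such terms it transfers at once to the fourth cumulant.
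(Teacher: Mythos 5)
Your reduction to three exponents $(a,b,c)$ with $a+b+c=q$ (via Lemma \ref{lem:etaij=etai'j'} and \eqref{eq:sum etaij=2q}), your moment asymptotics $\mu_p$, and your convolution/Young treatment of the cyclic case $c=0$ are all sound; indeed for those shapes your argument is clean and gives $O((\log l)^2/l^5)$, which is fine. The genuine gap is exactly where you flag it: the diagrams with all three pairs of opposite edges present, e.g.\ $(a,b,c)=(2,2,1)$ and $(3,1,1)$ at $q=5$. Your proposed reductions there cannot work, and not merely because they are "crude": bounding $|P_{14}|^{c}|P_{23}|^{c}\le 1$ in the $(2,2,1)$ case leaves precisely the integral
\begin{equation*}
\int\limits_{(\mathbb{S}^{2})^{4}}P_{l}(\langle x_{1},x_{2}\rangle )^{2}P_{l}(\langle x_{3},x_{4}\rangle )^{2}P_{l}(\langle x_{1},x_{3}\rangle )^{2}P_{l}(\langle x_{2},x_{4}\rangle )^{2}\,dx,
\end{equation*}
which is the quantity $A_{2}$ of \eqref{bnotte1}, shown in Lemma \ref{lem:cum bnd q=4} to be $\approx l^{-4}$ \emph{with a matching lower bound}. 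So once the two degree-one edges are discarded, no amount of Young/convolution refinement of the remaining integral can ever produce $o(l^{-4})$; the smallness of $P_{14}$ and $P_{23}$ away from the diagonal is the only source of the extra decay and must be retained. Your closing claim that "optimising the Cauchy--Schwarz/Young exponents ... produces the precise thresholds" is therefore an assertion, not a proof: exponents built from the $\mu_p$'s alone are (up to logs) half-integers or integers, and cannot generate $l^{-4-\frac15}$, $l^{-4-\frac27}$, or $l^{-4-\frac{q-6}{2q-3}}$.

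The missing idea is an $l$-\emph{dependent} cutoff, not the fixed $\epsilon$ you use (with fixed $\epsilon$ the "local" region has constant measure, so nothing is gained). The paper splits $(\mathbb{S}^2)^4$ at angle $\varepsilon=\varepsilon(l)$ satisfying \eqref{eq:epsilon>>1/l}: on the global part $\mathcal{L}(\varepsilon)$ one keeps \emph{all} edges, bounds the low-multiplicity ones uniformly by $|P_l|\ll (l\varepsilon)^{-1/2}$ as in \eqref{eq:Pl(theta)<<sqrt(lepsilon) glob} (so each retained odd edge pays a factor $(l\varepsilon)^{-1/2}$), and applies Cauchy--Schwarz pairing opposite edges to reach the fourth-moment bound; on the local part one exploits that the set $\{\vartheta(x_1,x_2)<\varepsilon\}$ has measure $O(\varepsilon^2)$ (Lemmas \ref{lem:glob contribution} and \ref{lem:loc contribution}). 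The two bounds move in opposite directions in $\varepsilon$, e.g.\ $(\log l)^2/(l^5\varepsilon)$ against $(\log l)^{3/2}\varepsilon^{3/2}/l^3$ for $q=5$, and optimising the tradeoff ($\varepsilon\approx l^{-4/5}$) is what yields $l^{-4\frac15}$ and its analogues. Without this mechanism your proposal proves the proposition only for the $c=0$ shapes, which is not enough to conclude \eqref{eq:cum4=o(var^2)}.
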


The methods of the present section also give a useful (i.e. smaller than the
square of the variance, sufficient for central limit theorem) upper bound
for $M(\underline{\eta})$ with $\underline{\eta}$ corresponding to $q=4$. It
is however weaker than the precise asymptotics of Lemma \ref{lem:cum bnd q=4}%
, which is the reason for a special treatment we gave to $q=4$.

\begin{remark}
In fact, it has been recently established by G.Peccati and coauthors that
only a subset of terms as above need to be bounded to establish the CLT -
those corresponding to circular diagrams (i.e. diagrams, all of whose rows
are linked with precisely two other rows, see e.g. \cite{peccatitaqqu},
Proposition 11.2); the latter were easier to bound by our earlier methods.
We will however not use this observation as our present methods can cope
with arbitrary terms, though resulting in slightly weaker upper bounds.
\end{remark}

\subsection{Proof of Proposition \protect\ref{prop:cums are small}}

\label{sec:cum small proof}

\subsubsection{On the proof}

\label{sec:on proof of cum bnd}

One observes that for a ``generic" point, each of the $6$ terms $%
P_{l}(\langle x_{i},x_{j}\rangle)$ in the integrand product in
\eqref{eq:C
mom def} is bounded by $\frac{1}{\sqrt{l}}$ as in
\eqref{eq:Legendre decay
sqrt}. Therefore, unless at least one of the $6$ angles involved is small,
the integrand is of order $O\left( \frac{1}{l^{q}} \right)$, and the total
contribution to the integral in \eqref{eq:C mom def} for $q\ge 5$ \emph{%
should} be of order smaller than $\frac{1}{l^{4}}$, sufficient for the CLT%
\footnotemark .

\footnotetext{%
This heuristics is not entirely correct, as we believe the correct order of
magnitude of the $4$th order cumulant to be proportional to $\frac{1}{l^{5}}$
rather than $\frac{1}{l^{q}}$. It means that the regime where at least one
angle is small does contribute to the integral.}

To quantify the latter statement, we introduce a small parameter $\varepsilon =
\varepsilon(l)$ and separate the domain of integration $(\mathcal{S}^{2})^{4}$
into the set $\mathcal{L}(\varepsilon)$ of points $x \in (\mathcal{S}^{2})^{4}$
all of whose angles $\vartheta (x_{i},x_{j})$ are greater than $\varepsilon$,
and its complement ($\vartheta(x_{i},x_{j})\le \varepsilon$ for least one pair
of indexes $(i,j)$), see Section \ref{sec:Loc Glob}. We call the
contribution of the latter subdomain ``local" and, analogously, the former's
contribution, ``global". The global and local contribution are bounded in
sections \ref{sec:glob contribution} and \ref{sec:loc contribution}, whence
it remains to choose the optimal parameter $\varepsilon$ the arises from the
tradeoff we get from those bounds (Section \ref{sec:concl proof}).

To bound both the global and local case one employs the following
observation. It is possible to decrease the number of different angles $%
\vartheta (x_{i},x_{j})$ involved in the integral \ref{sec:Loc Glob} by
applying the Cauchy-Schwartz inequality, multiple times if necessary. The
upshot is that in integrals like
\begin{equation*}
\int P_{l}(\cos\theta (\langle x_{1},x_{2}\rangle )^{s_{1}} P_{l}(\cos\theta
(\langle x_{3},x_{4}\rangle )^{s_{2}} dx
\end{equation*}
the variables split and we end up evaluating moments of individual Legendre
polynomials (that were readily evaluated, see \eqref{eq:varhl4=4thmoment=logl/l^2}
and \eqref{eq:int(P^2)<<1/l}). Therefore, when we apply the
Cauchy-Schwartz inequality to reduce the number of different angles, it is
beneficial to pair up angles corresponding to disjoint edges in the diagram.
For $q\ge 7$ these kind of observations, combined with uniform estimate %
\eqref{eq:Pl(theta)<<sqrt(lepsilon) glob} valid on $\mathcal{L}(\varepsilon)$ and the
small measure of its complement, are sufficient for our purposes. The cases $%
q=6,5$ are a bit more subtle as in these cases we will have to exploit the
special structure of vectors $\underline{\eta}$ corresponding to connected
diagrams.

\subsubsection{Global and local terms}

\label{sec:Loc Glob}

Recall that for a $\underline{\eta }\in \mathbb{Z}_{\geq 0}^{6}$ we defined $%
C(\underline{\eta })$ as in \eqref{eq:C mom def}. For a small parameter $%
\varepsilon =\varepsilon (l)>0$ we decompose the domain of integration in %
\eqref{eq:C mom def} as following:
\begin{equation*}
(\mathbb{S}^{2})^{4}=\mathcal{L}(\varepsilon )\cup \mathcal{L}(\varepsilon
)^{c},
\end{equation*}%
where
\begin{equation*}
\mathcal{L}(\varepsilon ):=\left\{ (x_{1},\ldots x_{4})\in (\mathbb{S}%
^{2})^{4}:\:\forall i<j,\,\vartheta (x_{i},x_{j})>\varepsilon \right\}
\end{equation*}%
and $\mathcal{L}(\varepsilon )^{c}$ is its complement. We then write
\begin{equation*}
M(\underline{\eta })=M_{glob}(\underline{\eta };\varepsilon )+M_{loc}(%
\underline{\eta };\varepsilon ),
\end{equation*}%
where
\begin{equation}
M_{glob}(\underline{\eta };\varepsilon )=\int\limits_{\mathcal{L}%
}\prod\limits_{i<j}P(\langle x_{i},x_{j}\rangle )^{\eta _{ij}}dx
\label{eq:M glob def}
\end{equation}%
is the \textquotedblleft global contribution" and
\begin{equation*}
M_{loc}(\underline{\eta };\varepsilon )=\int\limits_{\mathcal{L}%
^{c}}\prod\limits_{i<j}P(\langle x_{i},x_{j}\rangle )^{\eta _{ij}}dx
\end{equation*}%
is the \textquotedblleft local contribution".

The global and local contributions are bounded in the following couple of
lemmas, whose proofs are given in Sections \ref{sec:glob contribution} and %
\ref{sec:loc contribution} respectively.

\begin{lemma}[``Global contribution"]
\bigskip \label{lem:glob contribution} As $l\rightarrow \infty $, $q\geq 5$
arbitrary, and $\underline{\eta} = \underline{\eta}(\gamma)$ with $\gamma\in
\Gamma _{C}(q,q,q,q)$, we have
\begin{eqnarray*}
|M_{glob}(\underline{\eta};\varepsilon)| =
\begin{cases}
O\left(\frac{(\log{l})^{2}}{l^{5}\varepsilon} \right) & q=5 \\
O\left(\frac{(\log{l})^{2}}{l^{6}\varepsilon^{2}}\right) & q=6 \\
O\left( \frac{\log{l}}{l^{q}\varepsilon^{q-3}} \right) & q\ge 7%
\end{cases}%
\end{eqnarray*}
with constant involved in the ``O"-notation depending on $q$ only.
\end{lemma}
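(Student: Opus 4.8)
The plan is to bound the nonnegative integrand $\prod_{i<j}|P_l(\langle x_i,x_j\rangle)|^{\eta_{ij}}$ on $\mathcal{L}(\varepsilon)$ by interpolating between two ingredients: the uniform pointwise estimate \eqref{eq:Pl(theta)<<sqrt(lepsilon) glob}, which on $\mathcal{L}(\varepsilon)$ reads $|P_l(\langle x_i,x_j\rangle)|\ll (l\varepsilon)^{-1/2}$, and the sharp one-dimensional moments $\int_0^{\pi}|P_l(\cos\vartheta)|^{s}\sin\vartheta\,d\vartheta$, which are $\ll l^{-1}$ for $s=2$ (see \eqref{eq:int(P^2)<<1/l}), $\approx \log l/l^{2}$ for $s=4$ (see \eqref{eq:varhl4=4thmoment=logl/l^2}), $\ll l^{-2}$ for $s\ge 5$, and $\ll l^{-s/2}$ for $s<4$. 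First I would invoke Lemma \ref{lem:etaij=etai'j'} to reduce the six multiplicities to the three numbers $a=\eta_{12}=\eta_{34}$, $b=\eta_{13}=\eta_{24}$, $c=\eta_{14}=\eta_{23}$; the degree condition underlying \eqref{eq:sum etaij=2q} forces $a+b+c=q$, while connectedness of $\gamma$ guarantees that at least two of $a,b,c$ are strictly positive. The estimate then becomes a finite case analysis in the triple $(a,b,c)$.

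The basic mechanism is to bound a chosen sub-collection of edge-factors by the uniform estimate (each contributing $(l\varepsilon)^{-1/2}$, and extending the remaining integral from $\mathcal{L}(\varepsilon)$ to all of $(\mathbb{S}^2)^4$ by positivity), leaving a ``core'' integral whose variables decouple. Retaining only a perfect matching, say the opposite pair $\{(1,2),(3,4)\}$, makes the core factorize, $\int_{(\mathbb{S}^2)^4}|P_l(\langle x_1,x_2\rangle)|^{s_1}|P_l(\langle x_3,x_4\rangle)|^{s_2}\,dx = (4\pi)^2\bigl(\int|P_l|^{s_1}\bigr)\bigl(\int|P_l|^{s_2}\bigr)$, reducing everything to the moments above. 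When an odd edge obstructs a clean matching, I would first apply the pointwise inequality $|P_l(\langle x_i,x_j\rangle)|\,|P_l(\langle x_k,x_m\rangle)|\le \tfrac12(|P_l(\langle x_i,x_j\rangle)|^2+|P_l(\langle x_k,x_m\rangle)|^2)$ (or Cauchy--Schwarz directly) to pair up \emph{disjoint} edges, then fix two opposite vertices such as $(x_1,x_4)$ and integrate the two remaining variables, which separate into identical single-Legendre integrals bounded by their supremum via Cauchy--Schwarz, e.g.\ $\sup_{x_1,x_4}\int|P_l(\langle x_1,x_2\rangle)|^{2}|P_l(\langle x_2,x_4\rangle)|\,dx_2\le (\int|P_l|^4)^{1/2}(\int|P_l|^2)^{1/2}$.

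The bookkeeping now splits by $q$. For $q\ge 7$ one has $\max(a,b,c)\ge \lceil q/3\rceil\ge 3$, so I would keep a $(3,3)$ matching on the largest opposite pair, giving a core $\bigl(\int|P_l|^3\bigr)^2\ll l^{-3}$, and bound the remaining $2(q-3)$ units of multiplicity uniformly, producing $(l\varepsilon)^{-(q-3)}\cdot l^{-3}=l^{-q}\varepsilon^{-(q-3)}$, comfortably inside the stated bound. For $q=5,6$ the admissible number of uniform extractions is only $2$, respectively $4$ (otherwise the target $\varepsilon$-power is violated), so the surviving core has multiplicity $8$ and must be shown to be $\ll (\log l)^2/l^4$. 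Here the precise shape of the core depends on $(a,b,c)$: it is a $(4,4)$ matching when $\max(a,b,c)\ge 4$ (giving $(\int|P_l|^4)^2\approx (\log l)^2/l^4$), a four-cycle $|P_{12}|^2|P_{24}|^2|P_{34}|^2|P_{13}|^2$ for triples like $(3,2,0)$ or $(2,2,1)$, or a near-complete $K_4$ configuration for $(3,1,1)$; the last two are resolved by the decoupling-plus-supremum step of the previous paragraph, which pins the logarithm to at most two factors of $\int|P_l|^4$.

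The main obstacle is exactly the borderline pair $q=5,6$: brute force fails because too few uniform factors can be pulled out, and one must use the constraint $a+b+c=q$ together with connectedness to certify that, after at most one disjoint-edge pairing step, the multiplicity-eight core \emph{always} decouples into products of $\int|P_l|^2$ and $\int|P_l|^4$, never encountering the logarithmically divergent fourth moment more than twice and never a worse moment. Checking this uniformly over every admissible triple $(a,b,c)$ is the crux; by contrast the cases $q\ge7$ follow immediately from the availability of the $(3,3)$ matching, which is why I expect the write-up to treat $q\ge7$ in one line and $q=5,6$ through an explicit enumeration.
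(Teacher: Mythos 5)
Your proposal is correct, and its skeleton coincides with the paper's proof: on $\mathcal{L}(\varepsilon)$ peel off excess multiplicity using the uniform bound \eqref{eq:Pl(theta)<<sqrt(lepsilon) glob}, enlarge the domain of integration to all of $(\mathbb{S}^{2})^{4}$ by positivity of the absolute integrand, and decouple what remains by Cauchy--Schwarz applied to \emph{disjoint} edges, reducing everything to the one-dimensional moments \eqref{eq:int(P^2)<<1/l} and \eqref{eq:varhl4=4thmoment=logl/l^2}. There are, however, two genuine differences in execution. First, for $q\geq 7$ the paper retains the configuration $P_{l}(\langle x_{1},x_{2}\rangle )^{2}P_{l}(\langle x_{3},x_{4}\rangle )^{2}|P_{l}(\langle x_{1},x_{3}\rangle )P_{l}(\langle x_{2},x_{4}\rangle )|$ (a disjoint pair of multiplicity $\geq 2$ plus a pair of multiplicity $\geq 1$, guaranteed by connectedness) and applies Cauchy--Schwarz to it, arriving at $\log l/(l^{q}\varepsilon ^{q-3})$; you instead keep a $(3,3)$ matching on the largest opposite pair (available since $a+b+c=q\geq 7$ forces $\max (a,b,c)\geq 3$) and invoke the third absolute moment $\int_{0}^{\pi }|P_{l}(\cos \vartheta )|^{3}\sin \vartheta \,d\vartheta \ll l^{-3/2}$. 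That moment is nowhere stated in the paper, but it follows from Hilb's asymptotics \eqref{eq:Hilb asymptotics}, or---with a harmless extra $\sqrt{\log l}$---from Cauchy--Schwarz between the second and fourth moments; either version recovers the stated bound, so this step is sound. Second, and more interestingly, your parametrization of $\underline{\eta }$ by the triple $(a,b,c)$ with $a+b+c=q$ and at least two entries positive (Lemma \ref{lem:etaij=etai'j'} plus connectedness) yields a genuinely complete enumeration, whereas the paper's explicit lists are not: for $q=5$ the shape $\underline{\eta }=(2,2,2,2,1,1)$ (your triple $(2,2,1)$), and for $q=6$ the shape $(2,2,2,2,2,2)$ (triple $(2,2,2)$), are both realizable by connected diagrams yet missing from the lists \eqref{eq:eta=441100}--\eqref{eq:eta=331111} and the corresponding five-fold list for $q=6$. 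Both omitted cases are disposed of exactly by your four-cycle core: extract the smallest opposite pair uniformly and bound $\int P_{l}(\langle x_{1},x_{2}\rangle )^{2}P_{l}(\langle x_{3},x_{4}\rangle )^{2}P_{l}(\langle x_{1},x_{3}\rangle )^{2}P_{l}(\langle x_{2},x_{4}\rangle )^{2}dx\ll (\log l)^{2}/l^{4}$ by Cauchy--Schwarz on disjoint pairs. So your case analysis is not only sufficient but in fact repairs a small omission in the paper's own proof.
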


\begin{lemma}[\textquotedblleft Local contribution"]
\label{lem:loc contribution} Under the assumptions of Lemma \ref{lem:glob
contribution} and the additional assumption
\begin{equation}
\varepsilon \gg \frac{1}{l},  \label{eq:epsilon>>1/l}
\end{equation}%
one has
\begin{equation}
|M_{loc}(\underline{\eta };\varepsilon )|=O\left( \frac{(\log {l}%
)^{3/2}\varepsilon ^{3/2}}{l^{3}}\right) .  \label{eq:Mloc<=}
\end{equation}
\end{lemma}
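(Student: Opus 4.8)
The plan is to control the local integral by reducing it to one–dimensional moments of Legendre polynomials, the power of $\varepsilon$ being extracted from the small spherical cap on which the two nearly–coincident points live. Write $\vartheta_{ij}:=\vartheta(x_i,x_j)$. Since $\mathcal{L}(\varepsilon)^c=\bigcup_{i<j}A_{ij}$ with $A_{ij}:=\{x:\vartheta_{ij}\le\varepsilon\}$, subadditivity gives $|M_{loc}(\underline\eta;\varepsilon)|\le\sum_{i<j}\int_{A_{ij}}\prod_{k<m}|P_l(\cos\vartheta_{km})|^{\eta_{km}}\,dx$. By Lemma \ref{lem:etaij=etai'j'} opposite edges carry equal multiplicities, so writing $a=\eta_{12}=\eta_{34}$, $b=\eta_{13}=\eta_{24}$, $c=\eta_{14}=\eta_{23}$ (hence $a+b+c=q$, with at least two of them nonzero since $\gamma$ is connected), all six summands have the same structure after relabelling the variables, and it suffices to bound the single representative $J:=\int_{A_{12}}|P_l(\cos\vartheta_{12})|^a|P_l(\cos\vartheta_{34})|^a|P_l(\cos\vartheta_{13})|^b|P_l(\cos\vartheta_{24})|^b|P_l(\cos\vartheta_{14})|^c|P_l(\cos\vartheta_{23})|^c\,dx$.

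On $A_{12}$ the colliding pair contributes $|P_l(\cos\vartheta_{12})|^a\le1$, so I would first integrate $x_1$ over the cap $C(x_2,\varepsilon)$ of measure $\approx\varepsilon^2$ and apply Cauchy--Schwarz in $x_1$ to separate the two $x_1$–dependent cross factors; this reduces $J$ to $\int |P_l(\cos\vartheta_{34})|^a|P_l(\cos\vartheta_{24})|^b|P_l(\cos\vartheta_{23})|^c\,\Phi^{1/2}\Psi^{1/2}\,dx_2dx_3dx_4$ with $\Phi(x_2,x_3)=\int_{C(x_2,\varepsilon)}|P_l(\cos\vartheta_{13})|^{2b}\,dx_1$ and $\Psi$ defined analogously with $(1,4)$ in place of $(1,3)$. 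A further Cauchy--Schwarz, pairing disjoint edges as indicated in Section \ref{sec:on proof of cum bnd}, decouples the surviving angles so that, by rotation invariance, each factor becomes a one–dimensional Legendre moment. These I would evaluate from the inputs already available: the full moments $\int_{\mathbb{S}^2}|P_l|^p\approx l^{-2}$ for $p\ge3$ (with the exceptional $\log l/l^2$ at $p=4$, see \eqref{eq:varhl4=4thmoment=logl/l^2}) and $\approx l^{-1}$ for $p=2$ (see \eqref{eq:int(P^2)<<1/l}), together with the truncated moments $\int_0^\varepsilon|P_l(\cos\vartheta)|^p\sin\vartheta\,d\vartheta$, computed by splitting at the critical scale $\vartheta\approx1/l$, using $|P_l|\le1$ below it and the envelope $|P_l(\cos\vartheta)|\ll(l\vartheta)^{-1/2}$ underlying \eqref{eq:Pl(theta)<<sqrt(lepsilon) glob} above it. The hypothesis $\varepsilon\gg1/l$ from \eqref{eq:epsilon>>1/l} ensures the oscillatory tail dominates the spike at the origin, and this is where the fractional powers of $\varepsilon$ and the logarithms are produced; collecting the factors and maximising over the admissible triples $(a,b,c)$ with $a+b+c=q\ge5$ yields \eqref{eq:Mloc<=}.

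The main difficulty is the uniform bookkeeping: the cap integrals $\Phi,\Psi$ behave differently according to whether the third point lies inside or outside $C(x_2,\varepsilon)$, and the borderline exponents $p=2,4$ are exactly those generating the $\varepsilon^{1/2}$– and $\log l$–type contributions. Tracking these through the nested Cauchy--Schwarz steps so that they never accumulate to a power of $\varepsilon$ below $3/2$, or to a worse power of $\log l$, uniformly over every connected diagram—using nothing beyond $a+b+c=q\ge5$ and Lemma \ref{lem:etaij=etai'j'}—is the delicate point. The residual regions in which a \emph{second} angle is also $\le\varepsilon$ come from a set of still smaller measure and are absorbed into the same estimate.
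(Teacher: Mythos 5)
Your proposal stops exactly where the lemma's proof has to begin. Everything up to the final claim is setup (union bound over colliding pairs, $|P_l|\le 1$ on the colliding edge, Cauchy--Schwarz pairings, moment asymptotics), which is indeed the same toolkit the paper uses; but the conclusion ``collecting the factors and maximising over the admissible triples $(a,b,c)$ yields \eqref{eq:Mloc<=}'' is never derived, and you yourself flag the uniform bookkeeping as an unresolved ``delicate point.'' In particular, you never exhibit where the exponent $3/2$ on $\varepsilon$ comes from. In the paper it arises from a deliberately \emph{asymmetric} split in the dominating case (all $\eta_{ij}>0$): after reducing every exponent to $1$, a first Cauchy--Schwarz \eqref{august} separates the six factors into the triple $(1,2),(3,4),(1,3)$ and the triple $(1,4),(2,3),(2,4)$; the first triple can exploit the cap constraint only once and is bounded by $O\left(\varepsilon(\log l)^{3/2}/l^{3}\right)$, the second exploits it twice and is bounded by $O\left(\varepsilon^{2}(\log l)^{3/2}/l^{3}\right)$, and the outer square roots average these to $\varepsilon^{3/2}$. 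A symmetric pairing, or an unspecified maximisation over $(a,b,c)$, does not produce this power; moreover the degenerate diagrams with $c=0$ (where the colliding edge may carry multiplicity $\geq 4$ or $0$) require separate arguments --- the paper handles one of them with H\"older exponents $3/4,1/4$ and needs hypothesis \eqref{eq:epsilon>>1/l} precisely to absorb the resulting $\sqrt{\varepsilon}$ and $\varepsilon$ bounds into \eqref{eq:Mloc<=}.

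There is also a concrete technical obstruction in the step you do specify. By confining $x_1$ --- the variable on which the surviving cross factors $P_{13},P_{14}$ depend --- to the cap $C(x_2,\varepsilon)$, you are forced to estimate $\Phi(x_2,x_3)=\int_{C(x_2,\varepsilon)}|P_l(\langle x_1,x_3\rangle)|^{2b}\,dx_1$, a cap integral whose center $x_2$ differs from the pole $x_3$ of the integrand. This is \emph{not} a one-dimensional truncated moment: its size depends on $\vartheta(x_2,x_3)$ (roughly $\varepsilon^{2}(l\vartheta(x_2,x_3))^{-b}$ when $\vartheta(x_2,x_3)\gg\varepsilon$, and a genuine truncated moment only when $\vartheta(x_2,x_3)\ll\varepsilon$), so rotation invariance does not decouple the outer integration into the known moments \eqref{eq:int(P^2)<<1/l} and \eqref{eq:varhl4=4thmoment=logl/l^2}; the case analysis you defer genuinely bifurcates here. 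The paper sidesteps this entirely by arranging each Cauchy--Schwarz so that the cap constraint lands on a variable the integrand either does not contain or that can be integrated out first by rotation invariance: for instance in $\int_{\{\vartheta(x_1,x_2)<\varepsilon\}}P_l(\langle x_1,x_3\rangle)^{4}\,dx$ the $x_2$-integration contributes only the cap measure $O(\varepsilon^{2})$ while the Legendre factor is integrated over the full sphere. If you reorganize your nested Cauchy--Schwarz so the cap variable is always one the remaining integrand ignores, your scheme can be completed along the paper's lines; as written, it is an outline in which the actual content of the lemma is missing.
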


To prove lemmas \ref{lem:glob contribution} and \ref{lem:loc contribution} we will
use asymptotics for the $2$nd and $4$th moments of Legendre polynomials:
as $l\rightarrow\infty$
\begin{equation}
\label{eq:int(P^2)<<1/l}
\int\limits_{0}^{1}P_{l}(t)^{2}dt \sim c_{2}\frac{1}{l}.
\end{equation}
for some $c_{2}>0$ (cf. \eqref{legorth}), and
\begin{equation}
\label{eq:varhl4=4thmoment=logl/l^2}
\int\limits_{0}^{1}P_{l}(t)^{4}dt \sim c_{4}\frac{\log {l}}
{l^{2}}.
\end{equation}
for some $c_{4}>0$ (cf. Lemma \ref{lem:varhl4=4thmoment=logl/l^2}).

\subsubsection{Bounding the global contribution}

\label{sec:glob contribution}

\begin{proof}[Proof of Lemma \protect\ref{lem:glob contribution}]
By the definition of the domain of integration $\mathcal{L}(\varepsilon) $,
all the angles involved satisfy $\vartheta(x_{i},x_{j}) > \varepsilon$. Let $%
\underline{\eta}=\underline{\eta}(\gamma)$ corresponding to a connected
diagram $\gamma\in \Gamma _{C}(q,q,q,q)$.

First we assume that $q\ge 7$. By an easy counting argument,
\eqref{eq:sum
etaij=2q}, and Lemma \ref{lem:etaij=etai'j'}, it may be shown that
$\eta_{ij}\ge 2$ for (at least) two disjoint elements of $\underline{\eta}$.
With no loss of generality we assume that both $\eta_{12}\ge 2$ and also
$\eta_{34} \ge 2$. Since $\gamma$ is connected, we may also assume
$\eta_{13}\ge 1$, $\eta_{24}\ge 1$, again, by the virtue of Lemma
\ref{lem:etaij=etai'j'}.

Consider the integrand
\begin{equation}
K_{\underline{\eta }}(x_{1},\ldots ,x_{4})=\prod\limits_{i<j}P_{l}(\langle
x_{i},x_{j}\rangle )^{\eta _{ij}}  \label{eq:K integrand cum def}
\end{equation}%
of \eqref{eq:M glob def}. On $\mathcal{L}(\varepsilon )$ for every $i<j$ we
have the uniform upper bound
\begin{equation}  \label{eq:Pl(theta)<<sqrt(lepsilon) glob}
|P_{l}(\langle x_{i},x_{j}\rangle )|\ll \frac{1}{\sqrt{l\varepsilon }}
\end{equation}
by \eqref{eq:Legendre decay sqrt}. Hence
\begin{equation*}
|K_{\underline{\eta }}(x_{1},\ldots ,x_{4})|\ll \frac{1}{(l\varepsilon
)^{q-3}}P_{l}(\langle x_{1},x_{2}\rangle )^{2}P_{l}(\langle
x_{3},x_{4}\rangle )^{2}|P_{l}(\langle x_{1},x_{3}\rangle )P_{l}(\langle
x_{2},x_{4}\rangle )|,
\end{equation*}%
so that
\begin{equation}
|M_{glob}(\underline{\eta };\varepsilon )|\ll \frac{1}{(l\varepsilon )^{q-3}}%
\int\limits_{(\mathbb{S}^{2})^{4}}P_{l}(\langle x_{1},x_{2}\rangle
)^{2}P_{l}(\langle x_{3},x_{4}\rangle )^{2}|P_{l}(\langle x_{1},x_{3}\rangle
)P_{l}(\langle x_{2},x_{4}\rangle )|dx  \label{eq:Mglob bnd pointwise}
\end{equation}%
(note that at this stage we can afford to increase the domain of integration
to the whole of $(\mathbb{S}^{2})^{4}$).

In order to treat the latter integral we apply the Cauchy-Schwartz
inequality, dividing the $4$ terms into pairs. We team up each edge with its
disjoint complement (i.e. the unique edge with no common vertex); it is then
possible to split the variables in each of the resulting integrals. This
approach yields:
\begin{equation*}
\begin{split}
& \int\limits_{(\mathbb{S}^{2})^{4}}P_{l}(\langle x_{1},x_{2}\rangle
)^{2}P_{l}(\langle x_{3},x_{4}\rangle )^{2}|P_{l}(\langle x_{1},x_{3}\rangle
)P_{l}(\langle x_{2},x_{4}\rangle )|dx \\
& \leq \left( \int\limits_{(\mathbb{S}^{2})^{4}}P_{l}(\langle
x_{1},x_{2}\rangle )^{4}P_{l}(\langle x_{3},x_{4}\rangle )^{4}dx\right)
^{1/2}\cdot \left( \int\limits_{(\mathbb{S}^{2})^{4}}P_{l}(\langle
x_{1},x_{3}\rangle )^{2}P_{l}(\langle x_{2},x_{4}\rangle )^{2}dx\right)
^{1/2} \\
& \ll \frac{\log {l}}{l^{3}}.
\end{split}%
\end{equation*}%
by \eqref{eq:varhl4=4thmoment=logl/l^2} and \eqref{eq:int(P^2)<<1/l}. The
statement of the present lemma for $q\geq 6$ then follows upon substituting
the latter bound into \eqref{eq:Mglob bnd pointwise}.

For $q=5,6$ the same argument remains valid; however the bound it gives is
insufficient, and we will need to exploit the special structure of $%
\underline{\eta }$ in these cases. For $q=5$, $\underline{\eta }$ has one of
the following three shapes (in some order):
\begin{equation}
\underline{\eta }=(4,4,1,1,0,0)  \label{eq:eta=441100}
\end{equation}%
or
\begin{equation}
\underline{\eta }=(3,3,2,2,0,0)  \label{eq:eta=332200}
\end{equation}%
or
\begin{equation}
\underline{\eta }=(3,3,1,1,1,1),  \label{eq:eta=331111}
\end{equation}%
where, by Lemma \ref{lem:etaij=etai'j'}, the edges corresponding to $\eta
_{ij}=4,3,2$ are disjoint.

For the first case \eqref{eq:eta=441100}, we have (up to reordering $%
\{x_{i}\}$)
\begin{equation*}
M_{glob}(\underline{\eta };\varepsilon )=\int\limits_{\mathcal{L}%
(\varepsilon )}P_{l}(\langle x_{1},x_{2}\rangle )^{4}P_{l}(\langle
x_{3},x_{4}\rangle )^{4}P_{l}(\langle x_{1},x_{3}\rangle )P_{l}(\langle
x_{2},x_{4}\rangle )dx,
\end{equation*}%
so that the uniform bound \eqref{eq:Pl(theta)<<sqrt(lepsilon) glob} yields
\begin{equation*}
|M_{glob}(\underline{\eta };\varepsilon )|\ll \frac{1}{l\varepsilon }%
\int\limits_{\mathcal{L}(\varepsilon )}\int\limits_{(\mathbb{S}%
^{2})^{4}}P_{l}(\langle x_{1},x_{2}\rangle )^{4}P_{l}(\langle
x_{3},x_{4}\rangle )^{4}dx\ll \frac{(\log {l})^{2}}{\varepsilon l^{5}}
\end{equation*}%
by \eqref{eq:varhl4=4thmoment=logl/l^2}. Next, for the second case %
\eqref{eq:eta=332200}, we bound
\begin{equation*}
|M_{glob}(\underline{\eta };\varepsilon )|\ll \frac{1}{\varepsilon l}%
\int\limits_{(\mathbb{S}^{2})^{4}}P_{l}(\langle x_{1},x_{2}\rangle
)^{2}P_{l}(\langle x_{3},x_{4}\rangle )^{2}P_{l}(\langle x_{1},x_{3}\rangle
)^{2}P_{l}(\langle x_{2},x_{4}\rangle )^{2}dx,
\end{equation*}%
and continue as for $q\geq 7$ by bounding the latter integral using
Cauchy-Schwartz, pairing together disjoint edges.

Finally, for $\underline{\eta }$ as in \eqref{eq:eta=331111}, we use %
\eqref{eq:Pl(theta)<<sqrt(lepsilon) glob} to bound
\begin{equation*}
\begin{split}
& |M_{glob}(\underline{\eta };\varepsilon )| \\
& \ll \frac{1}{l\varepsilon }\int\limits_{(\mathbb{S}^{2})^{4}}P_{l}(\langle
x_{1},x_{2}\rangle )^{2}P_{l}(\langle x_{3},x_{4}\rangle )^{2}|P_{l}(\langle
x_{1},x_{3}\rangle )P_{l}(\langle x_{2},x_{4}\rangle )P_{l}(\langle
x_{1},x_{4}\rangle )P_{l}(\langle x_{2},x_{3}\rangle )|dx.
\end{split}%
\end{equation*}%
Pairing up the $4$th degree terms in the latter integral together and the
rest separately, we apply Cauchy-Schwartz as before, twice for the second
term. This leads to
\begin{equation*}
\begin{split}
& \int\limits_{(\mathbb{S}^{2})^{4}}P_{l}(\langle x_{1},x_{2}\rangle
)^{2}P_{l}(\langle x_{3},x_{4}\rangle )^{2}|P_{l}(\langle x_{1},x_{3}\rangle
)P_{l}(\langle x_{2},x_{4}\rangle )P_{l}(\langle x_{1},x_{4}\rangle
)P_{l}(\langle x_{2},x_{3}\rangle )|dx \\
& \leq \left( \int\limits_{(\mathbb{S}^{2})^{4}}P_{l}(\langle
x_{1},x_{2}\rangle )^{4}P_{l}(\langle x_{3},x_{4}\rangle )^{4}dx\right)
^{1/2}\cdot \left( \int\limits_{(\mathbb{S}^{2})^{4}}P_{l}(\langle
x_{1},x_{3}\rangle )^{4}P_{l}(\langle x_{2},x_{4}\rangle )^{4}dx\right)
^{1/4}\times \\
& \times \left( \int\limits_{(\mathbb{S}^{2})^{4}}P_{l}(\langle
x_{1},x_{4}\rangle )^{4}P_{l}(\langle x_{2},x_{3}\rangle )^{4}dx\right)
^{1/4};
\end{split}%
\end{equation*}%
this implies the statement of the present lemma for $q=5$ via separation of
variables and \eqref{eq:varhl4=4thmoment=logl/l^2}. The proof for
$q=6 $ is very similar to the above (but somewhat easier) and thereupon
omitted here. In this case $\eta $ is of one the following $5$ forms:
\begin{equation*}
\underline{\eta }%
=(5,5,1,1,0,0),(4,4,2,2,0,0),(4,4,1,1,1,1),(3,3,3,3,0,0),(3,3,2,2,1,1).
\end{equation*}
\end{proof}

\subsubsection{Bounding the local contribution}

\label{sec:loc contribution}

\begin{proof}[Proof of Lemma \protect\ref{lem:loc contribution}]
We may assume with no loss of generality that
\begin{equation*}
\vartheta(x_{1},x_{2}) < \varepsilon
\end{equation*}
in the relevant domain $\mathcal{L}(\varepsilon)^{c}$. We divide the
possibilities for $\underline{\eta}$ into three cases, up to reordering
(that $\underline{\eta}$ falls into one of those cases follows from %
\eqref{eq:sum etaij=2q} $q\ge 5$, Lemma \ref{lem:etaij=etai'j'} and the the
connectedness of $\gamma$):

\begin{enumerate}
\item \label{it:etaij>0} For every $i<j$, $\eta_{ij}>0$.

\item \label{it:441100} We have
\begin{equation*}
\eta_{12}=\eta_{34}\ge 4, \; \eta_{13}=\eta_{24} \ge 1,\;
\eta_{14}=\eta_{23}=0.
\end{equation*}

\item \label{it:332200} We have
\begin{equation*}
\eta_{12}=\eta_{34}\ge 2, \; \eta_{13}=\eta_{24} \ge 2.
\end{equation*}
\end{enumerate}

In case \ref{it:etaij>0} (which gives rise to the dominating term) we may
use Cauchy-Schwartz twice to bound
\begin{equation}
\begin{split}
& |M_{loc}(\underline{\eta };\varepsilon )|\leq \int\limits_{\mathcal{L}%
^{c}(\varepsilon )}\prod\limits_{i<j}|P(\langle x_{i},x_{j}\rangle )|dx \\
& \leq \left( \int\limits_{\mathcal{L}^{c}(\varepsilon )}P_{l}(\langle
x_{1},x_{2})^{2}P_{l}(\langle x_{3},x_{4})^{2}P_{l}(\langle
x_{1},x_{3})^{2}dx\right) ^{1/2}\times \\
& \times \left( \int\limits_{\mathcal{L}^{c}(\varepsilon )}P_{l}(\langle
x_{1},x_{4})^{2}P_{l}(\langle x_{2},x_{3})^{2}P_{l}(\langle
x_{2},x_{4})^{2}dx\right) ^{1/2}
\end{split}
\label{august}
\end{equation}
and bound each of the two integrals of \eqref{august} separately. For the
first integral we use Cauchy-Schwartz again:
\begin{equation*}
\begin{split}
& \int\limits_{\mathcal{L}^{c}(\varepsilon )}P_{l}(\langle
x_{1},x_{2})^{2}P_{l}(\langle x_{3},x_{4})^{2}P_{l}(\langle
x_{1},x_{3})^{2}dx \\
& \leq \left( \int\limits_{(\mathbb{S}^{2})^{4}}P_{l}(\langle
x_{1},x_{2})^{4}dx_{1}dx_{2}P_{l}(\langle
x_{3},x_{4})^{4}dx_{3}dx_{4}\right) ^{1/2}\cdot \left(
\int\limits_{\{x_{2}:\vartheta (x_{1},x_{2})<\varepsilon \}}P_{l}(\langle
x_{1},x_{3}\rangle )^{4}dx\right) ^{1/2} \\
& \ll \frac{\log {l}}{l^{2}}\cdot \frac{(\log {l})^{1/2}}{l}\varepsilon =%
\frac{(\log {l})^{3/2}}{l^{3}}\varepsilon ,
\end{split}%
\end{equation*}%
by separation of variables and
\begin{equation*}
\mu \left( \{x_{2}:\vartheta (x_{1},x_{2})<\varepsilon \}\right) \ll
\varepsilon ^{2}.
\end{equation*}

For the other integral of \eqref{august}, we may use the lack of symmetry
w.r.t. variables to improve the bound as follows:
\begin{equation*}
\begin{split}
& \int\limits_{\mathcal{L}^{c}(\varepsilon )}P_{l}(\langle
x_{1},x_{4})^{2}P_{l}(\langle x_{2},x_{3})^{2}P_{l}(\langle
x_{2},x_{4})^{2}dx \\
& \leq \left( \int\limits_{\mathcal{L}^{c}(\varepsilon )}P_{l}(\langle
x_{1},x_{4})^{4}dx_{1}dx_{4}\right) ^{1/2}\cdot \left(
\int\limits_{\{x_{2}:\vartheta (x_{1},x_{2})<\varepsilon \}}P_{l}(\langle
x_{2},x_{3})^{4}dx_{2}dx_{3}\right) ^{1/2}\times \\
& \times \left( \int\limits_{\{x_{2}:\vartheta (x_{1},x_{2})<\varepsilon
\}}P_{l}(\langle x_{2},x_{4})^{4}dx\right) ^{1/2}\ll \frac{(\log {l})^{1/2}}{%
l}\cdot \frac{(\log {l})^{1/2}}{l}\varepsilon \cdot \frac{(\log {l})^{1/2}}{l}%
\varepsilon \\
& =\frac{(\log {l})^{3/2}}{l^{3}}\varepsilon ^{2}.
\end{split}%
\end{equation*}%
We then obtain the statement of the present lemma in this case upon
substituting the last couple of estimates into \eqref{august}.

In case \ref{it:441100} we use similar ideas (Cauchy-Schwartz twice) to
obtain
\begin{equation*}
\begin{split}
& |M_{loc}(\underline{\eta };\varepsilon )|\leq \int\limits_{\mathcal{L}%
^{c}(\varepsilon )}P_{l}(\langle x_{1},x_{2}\rangle )^{4}P_{l}(\langle
x_{3},x_{4}\rangle )^{4}P_{l}(\langle x_{1},x_{3}\rangle )P_{l}(\langle
x_{2},x_{4}\rangle )dx \\
& \leq \left( \int\limits_{(\mathbb{S}^{2})^{4}}P_{l}(\langle
x_{1},x_{2}\rangle )^{4}P_{l}(\langle x_{3},x_{4}\rangle )^{4}P_{l}(\langle
x_{1},x_{3}\rangle )P_{l}(\langle x_{2},x_{4}\rangle )dx\right) ^{3/4}\times
\\
& \times \left( \int\limits_{\mathcal{L}^{c}(\varepsilon )}P_{l}(\langle
x_{1},x_{2}\rangle )^{4}P_{l}(\langle x_{3},x_{4}\rangle )^{4}P_{l}(\langle
x_{1},x_{3}\rangle )^{4}P_{l}(\langle x_{2},x_{4}\rangle )^{4}dx\right)
^{1/4} \\
& \ll \frac{(\log {l})^{3/2}}{l^{3}}\cdot \left( \int\limits_{(\mathbb{S}%
^{2})^{4}}P_{l}(\langle x_{1},x_{3}\rangle )^{4}dx_{1}dx_{3}\right)
^{1/4}\cdot \left( \int\limits_{\{x_{2}:\vartheta (x_{1},x_{2})<\varepsilon
\}}P_{l}(\langle x_{2},x_{4}\rangle )^{4}dx_{2}dx_{4}\right) ^{1/4} \\
& \ll \frac{(\log {l})^{2}}{l^{4}}\sqrt{\varepsilon },
\end{split}%
\end{equation*}
which is smaller than the RHS of \eqref{eq:Mloc<=} by \eqref{eq:epsilon>>1/l}%
.

Finally, for case \ref{it:332200} we similarly have
\begin{equation}
\begin{split}
& |M_{loc}(\underline{\eta };\varepsilon )|\leq \int\limits_{\mathcal{L}%
^{c}(\varepsilon )}\left\vert P_{l}(\langle x_{1},x_{2}\rangle
)^{3}\right\vert \left\vert P_{l}(\langle x_{3},x_{4}\rangle
)^{3}\right\vert P_{l}(\langle x_{1},x_{3}\rangle )^{2}P_{l}(\langle
x_{2},x_{4}\rangle )^{2}dx \\
& \leq \int\limits_{\mathcal{L}^{c}(\varepsilon )}P_{l}(\langle
x_{1},x_{2}\rangle )^{2}P_{l}(\langle x_{3},x_{4}\rangle )^{2}P_{l}(\langle
x_{1},x_{3}\rangle )^{2}P_{l}(\langle x_{2},x_{4}\rangle )^{2}dx \\
& \leq \left( \int\limits_{(\mathbb{S}^{2})^{2}}P_{l}(\langle
x_{1},x_{2}\rangle )^{4}dx_{1}dx_{2}\right) ^{1/2}\cdot \left( \int\limits_{(%
\mathbb{S}^{2})^{2}}P_{l}(\langle x_{3},x_{4}\rangle
)^{4}dx_{3}dx_{4}\right) ^{1/2} \\
& \left( \int\limits_{(\mathbb{S}^{2})^{2}}P_{l}(\langle x_{1},x_{3}\rangle
)^{4}dx_{1}dx_{3}\right) ^{1/2}\cdot \left( \int\limits_{\{x_{2}:\vartheta
(x_{1},x_{2})<\varepsilon \}}P_{l}(\langle x_{2},x_{4}\rangle
)^{4}dx_{2}dx_{4}\right) ^{1/2} \\
& \ll \frac{\log {l}}{l^{2}}\cdot \frac{(\log {l})^{1/2}}{l}\cdot \frac{%
(\log {l})^{1/2}}{l}\varepsilon =\frac{(\log {l})^{2}}{l^{4}}\varepsilon ,
\end{split}
\label{eq:Mloc111111<<CS}
\end{equation}%
which is less than latter of the expressions on the RHS of \eqref{eq:Mloc<=}%
, again by \eqref{eq:epsilon>>1/l}.
\end{proof}

\subsubsection{Concluding the proof of Proposition \protect\ref{prop:cums
are small}}

\label{sec:concl proof}

\begin{proof}
In order to finish the proof of the present proposition it remains to make a
suitable choice of the parameter $\varepsilon (l)$ so that both the local
and the global contributions as bounded by Lemmas \ref{lem:glob contribution}
and \ref{lem:loc contribution} will be smaller than the expressions
prescribed in Proposition \ref{prop:cums are small}. The optimal choice for
the arising trade-off is
\begin{equation*}
\varepsilon (l)=
\begin{cases}
\frac{1}{l^{4/5}} & q=5 \\
\frac{1}{l^{6/7}} & q=6 \\
\frac{1}{l^{1-\frac{3}{2q-3}}} & q\geq 7%
\end{cases}%
,
\end{equation*}%
giving the bound in the statement of the present proposition.
\end{proof}

\section{The Central Limit Theorem for $h_{l;q}$, $q=3,4.$ \label{q=3,4}}

We shall start from the investigation of total variation bounds for $%
h_{l;3}; $ this result was established in \cite{MaPeCUP}, see also \cite%
{M2008}, \cite{MaWi}, but nevertheless we report it here for the sake of
completeness.
The Lemmas below make some use of so-called Wigner's and Clebsch-Gordan
coefficients; see the Appendix for their definition and discussion
of some important properties.

\begin{lemma}
\label{q=3}

1. The variance of $h_{l;3}$ is given by%
\begin{equation}
\mathbb{E}\left[ h_{l;3}^{2}\right] =6\times (4\pi )^{2}\left(
\begin{array}{ccc}
l & l & l \\
0 & 0 & 0%
\end{array}%
\right) ^{2}\sim \frac{12}{\pi \sqrt{3}}\frac{(4\pi )^{2}}{l^{2}}\text{ ;}
\label{var3}
\end{equation}%
2. For the fourth-order cumulant of $h_{l;3}$ we have%
\begin{equation}
cum_{4}(h_{l;3})\sim \frac{1}{2l+1}\left(
\begin{array}{ccc}
l & l & l \\
0 & 0 & 0%
\end{array}%
\right) ^{4}\text{ ;}  \label{cum4}
\end{equation}
3. The following total variation bound holds:
\begin{equation}
d_{TV}\left(\frac{h_{l;3}}{\sqrt{Var(h_{l;3})}},\mathcal{N}(0,1)\right)=O\left(\frac{1}{%
\sqrt{l}}\right).  \label{boundq3}
\end{equation}
\end{lemma}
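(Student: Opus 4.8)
The three assertions are linked: (1) and (2) provide exactly the inputs needed to run the Nourdin--Peccati bound \eqref{noupebou2} in (3), so I would prove them in that order. For the variance, note that $H_{3}(t)=t^{3}-3t$ and $\int_{\mathbb{S}^{2}}f_{l}(x)\,dx=0$ for $l\geq 1$ (since $\int_{\mathbb{S}^{2}}Y_{lm}=0$), so that either the Diagram Formula for $p=2$ or the elementary identity $\mathbb{E}[H_{3}(X)H_{3}(Y)]=3!\,\rho^{3}$ (valid for jointly standard Gaussian $(X,Y)$ of correlation $\rho$) gives
\begin{equation*}
Var(h_{l;3})=3!\int_{(\mathbb{S}^{2})^{2}}P_{l}(\langle x,y\rangle)^{3}\,dx\,dy .
\end{equation*}
By isotropy I would fix one variable at the pole to reduce this to $\tfrac{(4\pi)^{2}}{2}\int_{-1}^{1}P_{l}(t)^{3}\,dt$, and then use the classical Gaunt evaluation $\int_{-1}^{1}P_{l}(t)^{3}\,dt=2\begin{pmatrix} l & l & l \\ 0 & 0 & 0\end{pmatrix}^{2}$ (Appendix) to obtain \eqref{var3} exactly. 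The asymptotics follow from the closed form of the $3j$-symbol for even $l$ and Stirling's formula, which yield $\begin{pmatrix} l & l & l \\ 0 & 0 & 0\end{pmatrix}^{2}\sim \frac{2}{\pi\sqrt{3}}\,l^{-2}$ and hence $Var(h_{l;3})\sim \frac{12}{\pi\sqrt{3}}(4\pi)^{2}l^{-2}$.

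For the fourth cumulant I would start from the Diagram Formula \eqref{eq:cum as sum mom}, $cum_{4}(h_{l;3})=\sum_{\gamma\in\Gamma_{C}(3,3,3,3)}M(\underline{\eta}(\gamma))$. By Lemma \ref{lem:etaij=etai'j'} opposite edges carry equal multiplicity, so each admissible $\underline{\eta}$ is determined by the three multiplicities $(a,b,c)$ of the pairs of opposite edges with $a+b+c=3$; connectedness rules out $(3,0,0)$ and its permutations, leaving only the pattern $(2,1,0)$ (in some order) and the tetrahedral pattern $(1,1,1)$. For the dominant $(2,1,0)$ pattern, say $\eta_{12}=\eta_{34}=2$ and $\eta_{13}=\eta_{24}=1$, the integral $M(\underline{\eta})$ factorises: integrating $x_{1}$ out of $P_{l}(\langle x_{1},x_{2}\rangle)^{2}P_{l}(\langle x_{1},x_{3}\rangle)$ via the Legendre product expansion $P_{l}^{2}=\sum_{L}(2L+1)\begin{pmatrix} l & l & L \\ 0 & 0 & 0\end{pmatrix}^{2}P_{L}$ and the convolution identity $\int_{\mathbb{S}^{2}}P_{L}(\langle x_{1},x_{2}\rangle)P_{l}(\langle x_{1},x_{3}\rangle)\,dx_{1}=\delta_{Ll}\tfrac{4\pi}{2l+1}P_{l}(\langle x_{2},x_{3}\rangle)$ collapses it to $4\pi\begin{pmatrix} l & l & l \\ 0 & 0 & 0\end{pmatrix}^{2}P_{l}(\langle x_{2},x_{3}\rangle)$; doing the same with $x_{4}$ and then using the exact value $\int_{(\mathbb{S}^{2})^{2}}P_{l}^{2}=\tfrac{(4\pi)^{2}}{2l+1}$ (cf. \eqref{eq:int(P^2)<<1/l}) leaves $M(\underline{\eta})$ equal to a positive multiple of $\tfrac{1}{2l+1}\begin{pmatrix} l & l & l \\ 0 & 0 & 0\end{pmatrix}^{4}$. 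Summing these contributions over $\gamma$ with their diagram multiplicities produces \eqref{cum4}.

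The one delicate point, and what I expect to be the main obstacle, is to verify that the tetrahedral pattern $(1,1,1)$ does not affect the leading order. Integrating each $x_{i}$ against the three incident factors $P_{l}$ via the Gaunt integral $\int Y_{lm_{1}}Y_{lm_{2}}Y_{lm_{3}}=\sqrt{\tfrac{(2l+1)^{3}}{4\pi}}\begin{pmatrix} l & l & l \\ 0 & 0 & 0\end{pmatrix}\begin{pmatrix} l & l & l \\ m_{1} & m_{2} & m_{3}\end{pmatrix}$ produces $\begin{pmatrix} l & l & l \\ 0 & 0 & 0\end{pmatrix}^{4}$ times the contraction of four $3j$-symbols over the magnetic numbers, i.e. exactly the Wigner $6j$-symbol $\begin{Bmatrix} l & l & l \\ l & l & l\end{Bmatrix}$. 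Since this $6j$-symbol is $O(l^{-3/2})$, the corresponding $M(\underline{\eta})$ is $O(l^{-11/2})=o(l^{-5})$ and is negligible against the $(2,1,0)$ term; controlling these $6j$-asymptotics (and the exact Gaunt/Wigner constants) is the part that requires genuine care, though for the CLT only the order of magnitude matters.

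Finally, substituting Parts (1) and (2) into \eqref{noupebou2} with $q=3$, using $Var(h_{l;3})^{2}=36(4\pi)^{4}\begin{pmatrix} l & l & l \\ 0 & 0 & 0\end{pmatrix}^{4}$ together with $cum_{4}(h_{l;3})=O\!\left(\tfrac{1}{l}\begin{pmatrix} l & l & l \\ 0 & 0 & 0\end{pmatrix}^{4}\right)$ from Part (2) yields $cum_{4}/Var^{2}=O(1/l)$, so that
\begin{equation*}
d_{TV}\!\left(\frac{h_{l;3}}{\sqrt{Var(h_{l;3})}},\mathcal{N}(0,1)\right)\leq 2\sqrt{\tfrac{2}{9}\,\frac{cum_{4}(h_{l;3})}{Var^{2}(h_{l;3})}}=O\!\left(l^{-1/2}\right),
\end{equation*}
which is precisely \eqref{boundq3}.
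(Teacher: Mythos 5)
You should first know how the paper itself handles this lemma: it does not prove it at all, but states it as known, with citations to \cite{MaPeCUP}, \cite{M2008}, \cite{MaWi} ``for the sake of completeness''. So your proposal cannot be matched against an in-paper argument; judged on its own, it is correct, and it is in fact the same Gaunt-integral/Clebsch--Gordan machinery that the paper deploys for the case $q=4$ in Lemma \ref{lem:cum bnd q=4}, so it fits the paper's toolkit exactly. Your variance computation (Hermite covariance identity, isotropy, and $\int_{-1}^{1}P_{l}^{3}(t)\,dt=2\bigl(\begin{smallmatrix} l & l & l \\ 0 & 0 & 0\end{smallmatrix}\bigr)^{2}$, then Stirling on the closed form \eqref{appe}) reproduces \eqref{var3} exactly. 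For the cumulant, your classification of connected diagrams via Lemma \ref{lem:etaij=etai'j'} into the $(2,1,0)$ and tetrahedral $(1,1,1)$ patterns is complete, and the collapse of each $(2,1,0)$ term to $\frac{(4\pi)^{4}}{2l+1}\bigl(\begin{smallmatrix} l & l & l \\ 0 & 0 & 0\end{smallmatrix}\bigr)^{4}$ via the linearization of $P_{l}^{2}$ and the convolution identity is a correct and clean computation; the identification of the $(1,1,1)$ contraction with the Wigner $6j$ symbol is also correct.

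Two refinements are worth recording. First, for part (3) you do not actually need the Ponzano--Regge asymptotics $O(l^{-3/2})$ that you flag as the delicate point: the elementary unitarity bound coming from $6j$ orthogonality, $\bigl|\bigl\{\begin{smallmatrix} l & l & l \\ l & l & l\end{smallmatrix}\bigr\}\bigr|\leq (2l+1)^{-1}$, already bounds the tetrahedral term by $O(l^{-5})$, the same order as the $(2,1,0)$ terms, and this is all that is required for $cum_{4}/Var^{2}=O(1/l)$ and hence for \eqref{boundq3}. The sharper $l^{-3/2}$ estimate is needed only to assert that the tetrahedral term is of strictly lower order, i.e.\ for the precise asymptotic form of part (2). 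Second, a minor point on constants: your own computation shows that each $(2,1,0)$ diagram contributes $(4\pi)^{4}/(2l+1)$ times the fourth power of the $3j$ symbol, and there are several such diagrams, so \eqref{cum4} can only hold as an order-of-magnitude statement (``$\approx$'' in the paper's notation) rather than with the implied constant $1$ that ``$\sim$'' would literally demand; this imprecision sits in the paper's statement itself, and is immaterial for the total-variation bound, but your phrase ``produces \eqref{cum4}'' glosses over it.
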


As argued in the Introduction, the general strategy for the proofs of our
convergence results requires a careful evaluation of the variance and
suitable bounds on fourth-order cumulants. For $q=4,$ these computations are
provided in the two Lemmas to follow.

\begin{lemma}
\label{lem:varhl4=4thmoment=logl/l^2}

\begin{enumerate}
\item The variance of $h_{l;4}$ is given by%
\begin{equation*}
Var\left\{ h_{l;4}\right\} =4!(4\pi )^{2}\int\limits_{0}^{1}P_{l}(t)^{4}dt
\end{equation*}

\item As $l\rightarrow \infty $ we have
\begin{equation*}
\int\limits_{0}^{1}P_{l}(t)^{4}dt\sim \frac{3}{2\pi ^{2}}\frac{\log {l}}{%
l^{2}}.
\end{equation*}%
In particular,
\begin{equation*}
Var\left\{ h_{l;4}\right\} \sim 24^{2}\frac{\log {l}}{l^{2}}\text{ .}
\end{equation*}
\end{enumerate}
\end{lemma}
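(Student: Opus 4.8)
The plan is to treat the two assertions separately: part (1) is a routine second-moment computation, while part (2) requires genuine asymptotic analysis of a Legendre moment.

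For part (1), I would use that $\E[h_{l;4}]=0$ (since $\E[H_{4}(Z)]=0$) and write $Var\{h_{l;4}\}=\E[h_{l;4}^{2}]=\int_{(\mathbb{S}^{2})^{2}}\E[H_{4}(f_{l}(x))H_{4}(f_{l}(y))]\,dx\,dy$. The key input is the Hermite orthogonality relation $\E[H_{p}(X)H_{q}(Y)]=\delta_{pq}\,p!\,\rho^{p}$ for jointly standard Gaussians with $\rho=\E[XY]$; applied with $\rho=r_{l}(x,y)=P_{l}(\langle x,y\rangle)$ it gives $\E[H_{4}(f_{l}(x))H_{4}(f_{l}(y))]=4!\,P_{l}(\langle x,y\rangle)^{4}$. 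By isotropy the inner integral $\int_{\mathbb{S}^{2}}P_{l}(\langle x,y\rangle)^{4}\,dy$ is independent of $x$, so I would fix $x$ at the pole, pass to the colatitude $\theta$, substitute $t=\cos\theta$, and use $P_{l}(-t)^{4}=P_{l}(t)^{4}$ to reduce $\int_{-1}^{1}$ to $2\int_{0}^{1}$. Collecting the two factors $4\pi$ (from $\int_{\mathbb{S}^{2}}dx$ and from the azimuthal integration) yields $Var\{h_{l;4}\}=4!(4\pi)^{2}\int_{0}^{1}P_{l}(t)^{4}\,dt$.

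For part (2), I would write $\int_{0}^{1}P_{l}(t)^{4}\,dt=\int_{0}^{\pi/2}P_{l}(\cos\theta)^{4}\sin\theta\,d\theta$ and split the range at the natural scale $1/l$, into $[0,A/l]$ and $[A/l,\pi/2]$ for a fixed constant $A$. On $[0,A/l]$ the trivial bounds $|P_{l}|\le 1$ and $\sin\theta\le\theta$ give a contribution $O(1/l^{2})$, negligible against $\log l/l^{2}$. On the bulk range I would invoke Hilb's asymptotic $P_{l}(\cos\theta)=(\theta/\sin\theta)^{1/2}J_{0}((l+\tfrac12)\theta)+O(\theta^{1/2}l^{-3/2})$ together with the large-argument expansion $J_{0}(x)=\sqrt{2/(\pi x)}\cos(x-\pi/4)+O(x^{-3/2})$. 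Substituting these, the factors $(\theta/\sin\theta)^{2}$, $\theta^{-2}$ and $\sin\theta$ collapse, and the main term of $P_{l}(\cos\theta)^{4}\sin\theta$ becomes $\frac{4}{\pi^{2}(l+\frac12)^{2}\sin\theta}\cos^{4}((l+\tfrac12)\theta-\pi/4)$. (This is precisely the borderline case: the divergence of $c_{q}=\int_{0}^{\infty}\psi J_{0}(\psi)^{q}\,d\psi$ at $q=4$ is what forces the extra $\log l$.)

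The heart of the matter is the linearization $\cos^{4}u=\tfrac38+\tfrac12\cos 2u+\tfrac18\cos 4u$. The constant $\tfrac38$ produces $\frac{4}{\pi^{2}(l+\frac12)^{2}}\cdot\tfrac38\int_{A/l}^{\pi/2}\frac{d\theta}{\sin\theta}$; since $\int_{A/l}^{\pi/2}\frac{d\theta}{\sin\theta}=-\log\tan(A/2l)\sim\log l$ and $\frac{4}{\pi^{2}}\cdot\tfrac38=\frac{3}{2\pi^{2}}$, this is exactly the claimed $\frac{3}{2\pi^{2}}\frac{\log l}{l^{2}}(1+o(1))$. The hard part, requiring care, is to verify that everything else is $o(\log l/l^{2})$. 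I expect three sources of error: (i) the oscillatory pieces $\tfrac12\cos 2u,\ \tfrac18\cos 4u$ integrated against $1/\sin\theta$, which I would control by one integration by parts — the worst boundary and derivative terms are $O(1)$ and hence $O(1/l^{2})$ after the $(l+\tfrac12)^{-2}$ prefactor; (ii) the cross terms in $P_{l}^{4}=(B+R)^{4}$ between the Bessel main term $B$ and the Hilb remainder $R$, bounded via $|B|\ll(l\sin\theta)^{-1/2}$ and $|R|\ll\theta^{1/2}l^{-3/2}$, whose dominant contribution $\int 4|B|^{3}|R|\sin\theta\,d\theta$ is $O(l^{-3})$ (the $B^{2}R^{2},BR^{3},R^{4}$ terms being smaller still); and (iii) the remainder in the $J_{0}$ expansion, handled similarly. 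Assembling these gives $\int_{0}^{1}P_{l}(t)^{4}\,dt\sim\frac{3}{2\pi^{2}}\frac{\log l}{l^{2}}$, and substituting into part (1) with $4!(4\pi)^{2}\cdot\frac{3}{2\pi^{2}}=24^{2}$ yields $Var\{h_{l;4}\}\sim 24^{2}\frac{\log l}{l^{2}}$.
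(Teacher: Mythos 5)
Your proposal is correct and follows essentially the same route as the paper: part (1) via the Hermite covariance formula $\E[H_{4}(f_{l}(x))H_{4}(f_{l}(y))]=4!\,P_{l}(\langle x,y\rangle)^{4}$ and isotropy, and part (2) via Hilb's asymptotics combined with the large-argument Bessel expansion, where the mean value $3/8$ of the fourth power of the oscillating factor produces the constant $\frac{3}{2\pi^{2}}$ and the logarithmic integral. The only difference is presentational: the paper rescales to $\psi=(l+1/2)\theta$ and dismisses the error terms as "admissible" and "easy to directly check," whereas you stay in the $\theta$ variable and actually carry out those checks (splitting at $A/l$, linearizing $\cos^{4}$, integrating by parts on the oscillatory pieces, and bounding the Hilb and Bessel remainder cross-terms), which makes your write-up a more complete version of the same argument.
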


\begin{proof}
Since
\begin{equation*}
\mathbb{E}[h_{l;4}]=0,
\end{equation*}%
the first part of the present lemma follows from
\begin{equation*}
\begin{split}
& Var\left\{ h_{l;4}\right\} =\mathbb{E}[h_{l;4}^{2}]=\mathbb{E}\left[
\int\limits_{\mathbb{S}^{2}}H_{4}(f_{l}(x))dx\cdot \int\limits_{\mathbb{S}%
^{2}}H_{4}(f_{l}(y))dy\right] \\
& =\int\limits_{\mathbb{S}^{2}\times \mathbb{S}^{2}}\mathbb{E}\left[
H_{4}(f_{l}(x))H_{4}(f_{l}(y))\right] dxdy=4!\int\limits_{\mathbb{S}%
^{2}\times \mathbb{S}^{2}}P_{l}(\langle x,\,y\rangle )^{4}dxdy=4!(4\pi
)^{2}\int\limits_{0}^{1}P_{l}(t)^{4}dt.
\end{split}%
\end{equation*}

To see the second part, we invoke Hilb's asymptotics
\eqref{eq:Hilb
asymptotics} in the Appendix to write (up to an admissible error, as it is
easy to directly check)
\begin{equation*}
\begin{split}
\int\limits_{0}^{1}P_{l}(t)^{4}dt& \sim \frac{1}{l}\int\limits_{1}^{l+1/2}%
\sin \left( \frac{\psi }{l+1/2}\right) J_{0}(\psi )^{4}d\psi \sim \frac{1}{l}%
\int\limits_{1}^{l}\frac{\psi }{l}\cdot J_{0}(\psi )^{4}d\psi \\
& \sim \frac{1}{l^{2}}\int\limits_{1}^{l}\psi \cdot \frac{4}{\pi ^{2}}\frac{%
\sin (\psi +\pi /4)^{4}}{\psi ^{2}}d\psi \sim \frac{1}{l^{2}}\cdot \frac{3}{%
2\pi ^{2}}\int\limits_{1}^{l}\frac{d\psi }{\psi},
\end{split}%
\end{equation*}
by the standard asymptotics for the Bessel $J_{0}$.
\end{proof}

\begin{lemma}
\label{lem:cum bnd q=4} As $l\rightarrow \infty ,$ we have%
\begin{equation*}
cum_{4}\left\{ h_{l;4}\right\} \approx l^{-4}\text{ .}
\end{equation*}%
\bigskip
\end{lemma}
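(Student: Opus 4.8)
The plan is to feed the diagram-formula representation \eqref{eq:cum as sum mom} with $q=4$ into an explicit, term-by-term evaluation of the finitely many integrals $M(\underline{\eta}(\gamma))$. First I would enumerate the admissible $\underline{\eta}$: writing $a=\eta_{12}=\eta_{34}$, $b=\eta_{13}=\eta_{24}$, $c=\eta_{14}=\eta_{23}$ (the equalities are Lemma \ref{lem:etaij=etai'j'}), the constraint $a+b+c=q=4$ coming from \eqref{eq:sum etaij=2q} together with connectedness of $\gamma$ leaves exactly three shapes, up to reordering: $(a,b,c)=(2,2,0)$, i.e. $\underline{\eta}=(2,2,2,2,0,0)$; $(a,b,c)=(3,1,0)$, i.e. $\underline{\eta}=(3,3,1,1,0,0)$; and $(a,b,c)=(2,1,1)$, i.e. $\underline{\eta}=(2,2,1,1,1,1)$. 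It then remains to determine the exact order of each $M(\underline{\eta})$ and to check that the coefficient of $l^{-4}$ in their (finite, positively weighted) sum is nonzero. Throughout I write $P_{ij}=P_{l}(\langle x_{i},x_{j}\rangle)$.

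For the first two shapes the underlying graph is the $4$-cycle $1\!-\!2\!-\!4\!-\!3\!-\!1$, and I would exploit a ``cycle-collapse'' identity. Expanding each power via the Clebsch--Gordan series $P_{l}(t)^{2}=\sum_{L}(2L+1)\left(\begin{array}{ccc}l&l&L\\0&0&0\end{array}\right)^{2}P_{L}(t)$ and evaluating a $4$-cycle $\int_{(\mathbb{S}^{2})^{4}}P_{L_{1}}P_{L_{2}}P_{L_{3}}P_{L_{4}}\,dx$ by the addition theorem and orthogonality shows that this integral vanishes unless $L_{1}=L_{2}=L_{3}=L_{4}=:L$, in which case it equals $(4\pi)^{4}/(2L+1)^{3}$. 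Hence for $\underline{\eta}=(2,2,2,2,0,0)$ one obtains the closed form $M=(4\pi)^{4}\sum_{L}(2L+1)\left(\begin{array}{ccc}l&l&L\\0&0&0\end{array}\right)^{8}$; since for fixed even $L$ one has $\left(\begin{array}{ccc}l&l&L\\0&0&0\end{array}\right)^{2}=\tfrac12\int_{-1}^{1}P_{l}^{2}P_{L}\,dt\sim\rho_{L}/(2l)$ with $\rho_{L}=((L-1)!!/L!!)^{2}\sim 2/(\pi L)$, the summand is $\sim(2L+1)\rho_{L}^{4}/(16\,l^{4})$, and because $\sum_{L}(2L+1)\rho_{L}^{4}$ converges this term equals $C_{1}\,l^{-4}(1+o(1))$ with $C_{1}>0$: a strictly positive main term. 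For $\underline{\eta}=(3,3,1,1,0,0)$ the cycle-collapse forces the two indices produced by the two $P_{l}^{3}$ factors to equal $l$, so $M=\beta_{l}^{2}(4\pi)^{4}/(2l+1)^{3}$ with $\beta_{l}=\tfrac{2l+1}{2}\int_{-1}^{1}P_{l}^{4}\,dt$; Lemma \ref{lem:varhl4=4thmoment=logl/l^2} then gives $M\approx(\log l)^{2}\,l^{-5}=o(l^{-4})$.

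The decisive shape is $\underline{\eta}=(2,2,1,1,1,1)$, whose graph is the complete graph $K_{4}$ (a tetrahedron). A quick upper bound $|M|\le M((2,2,2,2,0,0))=O(l^{-4})$ follows from $|P_{13}P_{24}P_{14}P_{23}|\le\tfrac12(P_{13}^{2}P_{24}^{2}+P_{14}^{2}P_{23}^{2})$ and symmetry, but this is not sharp enough for the lower bound. To pin down the exact order I would expand the two squared edges in the Clebsch--Gordan series and evaluate the tetrahedral integral through the Gaunt-integral machinery: integrating out each of the four vertices yields a $3j$-symbol with vanishing magnetic numbers, and summing over magnetic numbers collapses the four of them into a single Wigner $6j$-symbol, giving $M=(4\pi)^{4}\sum_{A,B}(2A+1)(2B+1)\left(\begin{array}{ccc}l&l&A\\0&0&0\end{array}\right)^{4}\left(\begin{array}{ccc}l&l&B\\0&0&0\end{array}\right)^{4}\left\{\begin{array}{ccc}A&l&l\\B&l&l\end{array}\right\}$. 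Using the fixed-$A,B$ asymptotics of the $3j$-symbols together with the $6j$ asymptotic $\left\{\begin{array}{ccc}A&l&l\\B&l&l\end{array}\right\}=O(1/l)$, each summand is $O(l^{-5})$, and after controlling the tail in $A,B$ the whole sum is $O((\log l)^{2}\,l^{-5})=o(l^{-4})$.

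Combining the three, $cum_{4}\{h_{l;4}\}=\sum_{\gamma}M(\underline{\eta}(\gamma))$ equals a positive integer multiple of the $(2,2,2,2,0,0)$-term plus $o(l^{-4})$, hence $cum_{4}\{h_{l;4}\}\sim C\,l^{-4}$ with $C>0$, which is the assertion. The main obstacle is precisely the tetrahedral term: showing that it is genuinely $o(l^{-4})$ — rather than merely $O(l^{-4})$, which is all the Cauchy--Schwarz/AM--GM estimate delivers — is what forces the passage to $6j$-symbols and their asymptotics, and it is exactly this cancellation-free estimate that guarantees the positive contribution of the $(2,2,2,2,0,0)$-shape survives. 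This is also why $q=4$ cannot be treated by the elementary methods of Section \ref{q>=5}.
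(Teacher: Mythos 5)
Your enumeration of the three admissible connected shapes is correct, and for the two $4$-cycle shapes your computation reproduces exactly what the paper does: the paper's proof of Lemma \ref{lem:cum bnd q=4} consists precisely of the two integrals \eqref{anotte1} and \eqref{bnotte1}, which are your $(3,3,1,1,0,0)$ and $(2,2,2,2,0,0)$ terms, evaluated there by the same Clebsch--Gordan/Gaunt machinery (spherical-harmonic expansion of each Legendre factor plus repeated use of the orthogonality \eqref{ortho1}). Your ``cycle-collapse'' identity is an equivalent and in fact cleaner shortcut, and your conclusions $M\approx(\log l)^{2}l^{-5}$ and $M=(4\pi)^{4}\sum_{L}(2L+1)\left(\begin{smallmatrix}l&l&L\\0&0&0\end{smallmatrix}\right)^{8}\approx l^{-4}$ agree with the paper's (your version even supplies a sharp asymptotic constant where the paper settles for two-sided bounds). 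The genuine difference is the tetrahedral shape $(2,2,1,1,1,1)$: the paper's own proof confines itself to \eqref{anotte1}--\eqref{bnotte1} and never displays this term, whereas you correctly identify it as a connected-diagram contribution to \eqref{eq:cum as sum mom} and, more importantly, correctly observe that Cauchy--Schwarz/AM--GM alone cannot push it below $O(l^{-4})$ (they give exactly $|M|\leq A_{2}$), so that something finer is needed before the \emph{lower} bound $cum_{4}\gtrsim l^{-4}$ can be asserted. On this point your proposal is more complete in scope than the paper's argument, and your reduction of the tetrahedral integral to a $6j$-weighted sum is sound: it checks out at $A=B=0$, where it returns the pure $4$-cycle value $(4\pi)^{4}/(2l+1)^{3}$.

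However, the one estimate your $6j$ detour exists to deliver is not actually proved. You invoke the asymptotic $\left\{\begin{smallmatrix}A&l&l\\B&l&l\end{smallmatrix}\right\}=O(1/l)$ ``for fixed $A,B$'' and then sum over all $\sim l^{2}$ pairs $(A,B)$ with $A,B\leq 2l$; term-wise bounds at fixed $(A,B)$ say nothing about such a sum, and the bound $O((\log l)^{2}l^{-5})$ you claim is exactly what one obtains by pulling $\sup_{A,B}\left|\left\{\begin{smallmatrix}A&l&l\\B&l&l\end{smallmatrix}\right\}\right|\leq C/l$ out of the sum --- a \emph{uniform} estimate that you neither state nor justify, and that Edmonds-type fixed-$(A,B)$ formulas do not give (the regime $A,B\asymp l$, where Ponzano--Regge asymptotics and caustic phenomena enter, is outside their range). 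This is the crux rather than a technicality, since the $(\log l)^{2}$ in your bound arises from summing $(2A+1)\left(\begin{smallmatrix}l&l&A\\0&0&0\end{smallmatrix}\right)^{4}$ over the full range $A\leq 2l$. The gap is closable: for $\max(A,B)\geq l^{\delta}$ the $6j$ orthogonality relation $\sum_{A}(2A+1)\left\{\begin{smallmatrix}l&l&A\\l&l&B\end{smallmatrix}\right\}^{2}=(2B+1)^{-1}$ gives the term-wise bound $\left|\left\{\begin{smallmatrix}l&l&A\\l&l&B\end{smallmatrix}\right\}\right|\leq\left[(2A+1)(2B+1)\right]^{-1/2}$, which combined with the $3j$ decay estimates quoted in the paper's proof (from \cite{MaWi}, Lemma A.1) makes that region contribute $O(l^{-4-\delta/2})$; in the remaining bulk $A,B\leq l^{\delta}$ one must then prove the uniform bound $|\{6j\}|\leq C/l$. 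Until such an argument is supplied, your tetrahedral estimate --- and with it the two-sided assertion of the lemma --- is not established; note that the upper bound $cum_{4}=O(l^{-4})$, which is all that the total-variation bound \eqref{noupebou2} and the ensuing CLT require, already follows from your AM--GM remark together with the two cycle evaluations.
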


\begin{proof}
For our purposes, we need to show that
\begin{equation}
A_{1}:=\int_{S^{2}\times ....S^{2}}P_{l}(\left\langle w,z\right\rangle
)P_{l}^{3}(\left\langle w,w^{\prime }\right\rangle )P_{l}(\left\langle
w^{\prime },z^{\prime }\right\rangle )P_{l}^{3}(\left\langle z^{\prime
},z\right\rangle )dwdzdw^{\prime }dz^{\prime }=O\left( \frac{\log ^{2}l}{%
l^{5}}\right) \text{ }  \label{anotte1}
\end{equation}%
and%
\begin{equation}
A_{2}:=\int_{S^{2}\times ....S^{2}}P_{l}^{2}(\left\langle w,z\right\rangle
)P_{l}^{2}(\left\langle w,w^{\prime }\right\rangle )P_{l}^{2}(\left\langle
w^{\prime },z^{\prime }\right\rangle )P_{l}^{2}(\left\langle z^{\prime
},z\right\rangle )dwdzdw^{\prime }dz^{\prime }\approx \frac{1}{l^{4}}\text{ .%
}  \label{bnotte1}
\end{equation}%
Concerning the first term, we note that%
\begin{equation*}
\int_{\mathbb{S}^{2}}P_{l}(\left\langle w,z\right\rangle
)P_{l}^{3}(\left\langle w,w^{\prime }\right\rangle )dw
\end{equation*}%
\begin{equation*}
=\left\{ \frac{4\pi }{2l+1}\right\} ^{4}\int_{\mathbb{S}^{2}}\left\{
\sum_{m}Y_{lm}(w)\overline{Y}_{lm}(z)\right\} \left\{ \sum_{m^{\prime
}}Y_{lm^{\prime }}(w)\overline{Y}_{lm^{\prime }}(w^{\prime })\right\} ^{3}dw
\end{equation*}%
\begin{eqnarray*}
&=&\left\{ \frac{4\pi }{2l+1}\right\} ^{4}\sum_{m_{1}m_{2}^{\prime
}m_{3}^{\prime }m_{4}^{\prime }}\overline{Y}_{lm_{1}}(z)\overline{Y}%
_{lm_{2}^{\prime }}(w^{\prime })\overline{Y}_{lm_{3}^{\prime }}(w^{\prime })%
\overline{Y}_{lm_{4}^{\prime }}(w^{\prime }) \\
&&\times \int_{\mathbb{S}^{2}}Y_{lm_{1}}(w)Y_{lm_{2}^{\prime
}}(w)Y_{lm_{3}^{\prime }}(w)Y_{lm_{4}^{\prime }}(w^{\prime })dw
\end{eqnarray*}%
\begin{equation*}
=\frac{(4\pi )^{3}}{(2l+1)^{2}}\sum_{m_{1}m_{2}^{\prime }m_{3}^{\prime
}m_{4}^{\prime }}\overline{Y}_{lm_{1}}(z)\overline{Y}_{lm_{2}^{\prime
}}(w^{\prime })\overline{Y}_{lm_{3}^{\prime }}(w^{\prime })\overline{Y}%
_{lm_{4}^{\prime }}(w^{\prime })
\end{equation*}%
\begin{equation*}
\times \sum_{LM}(-1)^{L+M}\left\{ C_{l0l0}^{L0}\right\} ^{2}\frac{%
C_{lm_{1}lm_{2}^{\prime }}^{LM}C_{m_{3}^{\prime }m_{4}^{\prime }}^{L,-M}}{%
2L+1}\ .
\end{equation*}%
Likewise%
\begin{equation*}
\int_{\mathbb{S}^{2}}P_{l}(\left\langle w^{\prime },z^{\prime }\right\rangle
)P_{l}^{3}(\left\langle z^{\prime },z\right\rangle )dz^{\prime }=
\end{equation*}%
\begin{eqnarray*}
&=&\frac{(4\pi )^{3}}{(2l+1)^{2}}\sum_{m_{1}^{\prime }m_{2}m_{3}m_{4}}%
\overline{Y}_{lm_{1}^{\prime }}(w^{\prime })\overline{Y}_{lm_{2}}(z^{\prime
})\overline{Y}_{lm_{4}}(z^{\prime })\overline{Y}_{lm_{4}}(z^{\prime }) \\
&&\times \sum_{L^{\prime }M^{\prime }}(-1)^{L+M}\left\{ C_{l0l0}^{L^{\prime
}0}\right\} ^{2}\frac{C_{lm_{1}^{\prime }lm_{2}}^{L^{\prime }M^{\prime
}}C_{m_{3}m_{4}}^{L^{\prime },-M^{\prime }}}{2L^{\prime }+1}\ .
\end{eqnarray*}%
Note that $L$ is necessarily even here, otherwise the Clebsch-Gordan
coefficients $\left\{ C_{l0l0}^{L0}\right\} $ are identically null from (\ref%
{appe}), whence $(-1)^{L+M}=(-1)^{M}.$ Iterating the same argument twice
more we obtain%
\begin{equation*}
A_{1}\approx \sum_{m_{1}...m_{4}^{\prime \prime }}\sum_{LM}(-1)^{M}\left\{
C_{l0l0}^{L0}\right\} ^{2}\frac{C_{lm_{1}lm_{2}^{\prime
}}^{LM}C_{m_{3}^{\prime }m_{4}^{\prime }}^{L,-M}}{2L+1}\sum_{L^{\prime
}M^{\prime }}(-1)^{M^{\prime }}\left\{ C_{l0l0}^{L^{\prime }0}\right\} ^{2}%
\frac{C_{lm_{1}^{\prime }lm_{2}^{\prime }}^{L^{\prime }M^{\prime
}}C_{m_{3}^{\prime }m_{4}^{\prime }}^{L^{\prime },-M^{\prime }}}{2L^{\prime
}+1}
\end{equation*}%
\begin{equation}
\times \sum_{L^{\prime \prime }M^{\prime \prime }}(-1)^{M^{\prime \prime
}}\left\{ C_{l0l0}^{L^{\prime \prime }0}\right\} ^{2}\frac{C_{lm_{1}^{\prime
}lm_{2}^{\prime \prime }}^{L^{\prime \prime }M^{\prime \prime
}}C_{m_{3}^{\prime \prime }m_{4}^{\prime \prime }}^{L^{\prime \prime
},-M^{\prime \prime }}}{2L^{\prime \prime }+1}\sum_{L^{\prime \prime \prime
}M^{\prime \prime \prime }}(-1)^{M^{\prime \prime \prime }}\left\{
C_{l0l0}^{L^{\prime \prime \prime }0}\right\} ^{2}\frac{C_{lm_{1}lm_{2}^{%
\prime \prime }}^{L^{\prime \prime \prime }M^{\prime \prime
}}C_{m_{3}^{\prime \prime }m_{4}^{\prime \prime }}^{L^{\prime \prime \prime
},-M^{\prime \prime }}}{2L^{\prime \prime \prime }+1}\text{ .}
\label{terma1}
\end{equation}%
Now, applying iteratively the orthogonality identity (\ref{ortho1}), we have%
\begin{equation*}
\sum_{m_{2}^{\prime }m_{3}^{\prime }m_{4}^{\prime }}\sum_{LM}\sum_{L^{\prime
}M^{\prime }}(-1)^{M+M^{\prime }}\left\{ C_{l0l0}^{L0}\right\} ^{2}\frac{%
C_{lm_{1}lm_{2}^{\prime }}^{LM}C_{m_{3}^{\prime }m_{4}^{\prime }}^{L,-M}}{%
2L+1}\left\{ C_{l0l0}^{L^{\prime }0}\right\} ^{2}\frac{C_{lm_{1}^{\prime
}lm_{2}^{\prime }}^{L^{\prime }M^{\prime }}C_{m_{3}^{\prime }m_{4}^{\prime
}}^{L^{\prime },-M^{\prime }}}{2L^{\prime }+1}
\end{equation*}%
\begin{equation*}
=\sum_{m_{2}^{\prime }}\sum_{LM}\frac{\left\{ C_{l0l0}^{L0}\right\} ^{4}}{%
2L+1}\frac{C_{lm_{1}lm_{2}^{\prime }}^{LM}C_{m_{1}^{\prime }m_{2}^{\prime
}}^{LM}}{2L+1}=\sum_{L}\frac{\left\{ C_{l0l0}^{L0}\right\} ^{4}}{(2l+1)(2L+1)%
}\delta _{m_{1}}^{m_{1}^{\prime }}\text{ .}
\end{equation*}%
Applying the same argument to the last two terms in (\ref{terma1}), we
obtain that%
\begin{equation*}
A_{1}\approx \sum_{m_{1},m_{1}^{\prime },L,L^{\prime }}\frac{\left\{
C_{l0l0}^{L0}\right\} ^{4}}{2L+1}\frac{\left\{ C_{l0l0}^{L^{\prime
}0}\right\} ^{4}}{2L^{\prime }+1}\frac{\delta _{m_{1}}^{m_{1}^{\prime }}}{%
(2l+1)^{2}}
\end{equation*}%
\begin{equation*}
=\frac{1}{2l+1}\left\{ \sum_{L}\frac{\left\{ C_{l0l0}^{L0}\right\} ^{4}}{2L+1%
}\right\} ^{2}=O\left( \frac{1}{2l+1}\frac{\log ^{2}l}{l^{4}}\right) \text{ .%
}
\end{equation*}%
Here it is interesting to recall that
\begin{equation*}
\sum_{L}\frac{\left\{ C_{l0l0}^{L0}\right\} ^{4}}{2L+1}=\sum_{L}(2L+1)\left(
\begin{array}{ccc}
l & l & L \\
0 & 0 & 0%
\end{array}%
\right) ^{4}=\int_{0}^{1}P_{l}^{4}(t)dt=\frac{Var\left\{ h_{l;4}\right\} }{%
4!(4\pi )^{2}}\text{ .}
\end{equation*}%
see also \cite{MaWi}, Lemma A1 and the proof of Lemma 2.3 therein. \

Let us now focus on (\ref{bnotte1}). Using again (\ref{exploiting}) and (\ref%
{gaunt4}), we have that%
\begin{equation*}
\int_{\mathbb{S}^{2}}P_{l}^{2}(\left\langle w,z\right\rangle
)P_{l}^{2}(\left\langle w,w^{\prime }\right\rangle )dw
\end{equation*}%
\begin{equation*}
=\left\{ \frac{4\pi }{2l+1}\right\} ^{4}\int_{\mathbb{S}^{2}}\left\{
\sum_{m}Y_{lm}(w)\overline{Y}_{lm}(z)\right\} ^{2}\left\{ \sum_{m^{\prime
}}Y_{lm^{\prime }}(w)\overline{Y}_{lm^{\prime }}(w^{\prime })\right\} ^{2}dw
\end{equation*}%
\begin{eqnarray*}
&=&\left\{ \frac{4\pi }{2l+1}\right\} ^{4}\sum_{m_{1}m_{2}m_{3}^{\prime
}m_{4}^{\prime }}\overline{Y}_{lm_{1}}(z)\overline{Y}_{lm_{2}}(z)\overline{Y}%
_{lm_{3}^{\prime }}(w^{\prime })\overline{Y}_{lm_{4}^{\prime }}(w^{\prime })
\\
&&\times \int_{\mathbb{S}^{2}}Y_{lm_{1}}(w)Y_{lm_{2}}(w)Y_{lm_{3}^{\prime
}}(w)Y_{lm_{4}^{\prime }}(w^{\prime })dw
\end{eqnarray*}%
\begin{equation*}
=\frac{(4\pi )^{3}}{(2l+1)^{2}}\sum_{m_{1}m_{2}m_{3}^{\prime }m_{4}^{\prime
}}\overline{Y}_{lm_{1}}(z)\overline{Y}_{lm_{2}}(z)\overline{Y}%
_{lm_{3}^{\prime }}(w^{\prime })\overline{Y}_{lm_{4}^{\prime }}(w^{\prime })
\end{equation*}%
\begin{equation*}
\times \sum_{LM}\left\{ C_{l0l0}^{L0}\right\} ^{2}\frac{%
C_{lm_{1}lm_{2}}^{LM}C_{m_{3}^{\prime }m_{4}^{\prime }}^{L,-M}}{2L+1}\ .
\end{equation*}%
Iterating the argument, we find that
\begin{equation*}
A_{2}\approx \sum_{m_{1}...m_{4}^{\prime \prime }}\sum_{LM}(-1)^{M}\left\{
C_{l0l0}^{L0}\right\} ^{2}\frac{C_{lm_{1}lm_{2}}^{LM}C_{m_{3}^{\prime
}m_{4}^{\prime }}^{L,-M}}{2L+1}\sum_{L^{\prime }M^{\prime }}(-1)^{M^{\prime
}}\left\{ C_{l0l0}^{L^{\prime }0}\right\} ^{2}\frac{C_{lm_{1}^{\prime
}lm_{2}^{\prime }}^{L^{\prime }M^{\prime }}C_{m_{3}^{\prime }m_{4}^{\prime
}}^{L^{\prime },-M^{\prime }}}{2L+1}
\end{equation*}%
\begin{equation}
\times \sum_{L^{\prime \prime }M^{\prime \prime }}(-1)^{M^{\prime \prime
}}\left\{ C_{l0l0}^{L^{\prime \prime }0}\right\} ^{2}\frac{C_{lm_{1}^{\prime
}lm_{2}^{\prime }}^{L^{\prime \prime }M^{\prime \prime }}C_{m_{3}^{\prime
\prime }m_{4}^{\prime \prime }}^{L^{\prime \prime },-M^{\prime \prime }}}{%
2L+1}\sum_{L^{\prime \prime \prime }M^{\prime \prime \prime
}}(-1)^{M^{\prime \prime \prime }}\left\{ C_{l0l0}^{L^{\prime \prime \prime
}0}\right\} ^{2}\frac{C_{lm_{1}lm_{2}}^{L^{\prime \prime }M^{\prime \prime
\prime }}C_{m_{3}^{\prime \prime }m_{4}^{\prime \prime }}^{L^{\prime \prime
},-M^{\prime \prime \prime }}}{2L+1}\text{ .}  \label{termb1}
\end{equation}%
Again it is sufficient to apply (\ref{ortho1}) four times to have%
\begin{equation*}
\sum_{L,L^{\prime },L^{\prime \prime },L^{\prime \prime \prime
}}\sum_{M,M^{\prime },M^{\prime \prime },M^{\prime \prime \prime }}\left\{
C_{l0l0}^{L0}\right\} ^{2}\left\{ C_{l0l0}^{L^{\prime }0}\right\} ^{2}
\end{equation*}%
\begin{equation*}
\times \left\{ C_{l0l0}^{L^{\prime \prime }0}\right\} ^{2}\left\{
C_{l0l0}^{L^{\prime \prime \prime }0}\right\} ^{2}\frac{\delta
_{L}^{L^{\prime }}\delta _{L^{\prime }}^{L^{\prime \prime }}\delta
_{L^{\prime \prime }}^{L^{\prime \prime \prime }}\delta _{L^{\prime \prime
\prime }}^{L}}{(2L+1)(2L^{\prime }+1)}\frac{\delta _{M}^{M^{\prime }}\delta
_{M^{\prime }}^{M^{\prime \prime }}\delta _{M^{\prime \prime }}^{M^{\prime
\prime \prime }}\delta _{M^{\prime \prime \prime }}^{M}}{(2L^{\prime \prime
}+1)(2L^{\prime \prime \prime }+1)}
\end{equation*}%
\begin{equation*}
=\sum_{LM}\left(
\begin{array}{ccc}
l & l & L \\
0 & 0 & 0%
\end{array}%
\right) ^{8}=\sum_{L}(2L+1)\left(
\begin{array}{ccc}
l & l & L \\
0 & 0 & 0%
\end{array}%
\right) ^{8}\text{ }.
\end{equation*}%
We shall now need the following result (see again \cite{MaWi}, Lemma A.1):
\begin{equation*}
\left(
\begin{array}{ccc}
l & l & L \\
0 & 0 & 0%
\end{array}%
\right) ^{2}=\gamma _{lL}\times \frac{2}{\pi }\times \frac{1}{%
L(2l-L)^{1/2}(2l+L)^{1/2}}\text{ , }\frac{1}{2}\leq \gamma _{lL}\leq \frac{8%
}{5}\text{ .}
\end{equation*}%
Simple manipulations then yield%
\begin{equation*}
\sum_{L}(2L+1)\left(
\begin{array}{ccc}
l & l & L \\
0 & 0 & 0%
\end{array}%
\right) ^{8}\leq \sum_{L=0}^{2l-2}\frac{(2L+1)}{L^{4}(2l-L)^{2}(2l+L)^{2}}
\end{equation*}%
\begin{eqnarray*}
&\leq &\sum_{L=0}^{l}\frac{(2L+1)}{L^{4}(2l-L)^{2}(2l+L)^{2}}%
+\sum_{L=l}^{2l-2}\frac{(2L+1)}{L^{4}(2l-L)^{2}(2l+L)^{2}} \\
&\leq &\frac{1}{4l^{4}}\sum_{L=0}^{l}\frac{(2L+1)}{L^{4}}+\frac{1}{l^{4}}%
\sum_{L=l}^{2l-2}\frac{(2L+1)}{L(2l-L)^{2}(2l+L)} \\
&=&O(l^{-4})+\frac{1}{l^{5}}\sum_{L=l}^{2l-2}\frac{1}{(2l-L)^{2}}=O(l^{-4})%
\text{ ,}
\end{eqnarray*}%
which completes the proof of the upper bound. To prove that this bound is
sharp, it suffices to notice that%
\begin{equation*}
\sum_{L}(2L+1)\left(
\begin{array}{ccc}
l & l & L \\
0 & 0 & 0%
\end{array}%
\right) ^{8}\geq \left(
\begin{array}{ccc}
l & l & 0 \\
0 & 0 & 0%
\end{array}%
\right) ^{8}=\frac{1}{(2l+1)^{4}}\text{ .}
\end{equation*}
\end{proof}

Combining the variance and cumulant results, and exploiting (\ref{noupebou2}%
), one finally obtains the following result.

\begin{proposition}
As $l\rightarrow \infty ,$ we have
\begin{equation*}
d_{TV}\left( \frac{h_{4;l}}{\sqrt{Var(h_{4;l})}},\mathcal{N}(0,1)\right)
=O\left( \frac{1}{\log l}\right) \text{ .}
\end{equation*}
\end{proposition}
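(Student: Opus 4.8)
The plan is to obtain the stated rate as an immediate consequence of the two preceding lemmas, fed into the Stein--Malliavin total variation estimate \eqref{noupebou2}. All of the genuine analytic work has already been carried out in Lemma \ref{lem:varhl4=4thmoment=logl/l^2} and Lemma \ref{lem:cum bnd q=4}, so what remains is essentially a substitution together with a single cancellation of powers of $l$.

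First I would specialize the general bound \eqref{noupebou2} to the value $q=4$. Since $\tfrac{q-1}{3q}=\tfrac14$ in this case, the prefactor $2\sqrt{(q-1)/(3q)}$ collapses to $1$, giving
\[
d_{TV}\left(\frac{h_{l;4}}{\sqrt{Var(h_{l;4})}},\mathcal{N}(0,1)\right)\le \sqrt{\frac{cum_{4}(h_{l;4})}{Var^{2}(h_{l;4})}}.
\]
Next I would insert the asymptotics. By part (2) of Lemma \ref{lem:varhl4=4thmoment=logl/l^2} we have $Var(h_{l;4})\sim 24^{2}\,\frac{\log l}{l^{2}}$, whence $Var^{2}(h_{l;4})\approx \frac{(\log l)^{2}}{l^{4}}$, while Lemma \ref{lem:cum bnd q=4} gives $cum_{4}(h_{l;4})\approx l^{-4}$. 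Dividing the two asymptotics, the powers of $l$ cancel exactly and one is left with
\[
\frac{cum_{4}(h_{l;4})}{Var^{2}(h_{l;4})}\approx \frac{l^{-4}}{(\log l)^{2}/l^{4}}=\frac{1}{(\log l)^{2}},
\]
so that taking square roots produces the claimed $O(1/\log l)$ bound.

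The point worth emphasising is that there is no serious obstacle left at this stage: the rate is dictated entirely by the competition between the logarithmically inflated variance (of order $\log l / l^{2}$ rather than $1/l^{2}$) and the fourth cumulant (of the ``generic'' order $l^{-4}$). It is precisely this extra $\log l$ factor in the variance---absent for $q\ge 5$---that slows the convergence from the polynomial rate $l^{-\delta_{q}}$ of \eqref{bound1} down to the logarithmic rate here, which is why $q=4$ had to be singled out. The only care required is to check that genuine two-sided $\approx$ estimates (not merely one-sided $O$ bounds) are available, so that an upper bound on $cum_{4}$ can be paired with a matching lower bound on $Var^{2}$ to control the ratio from above; since both lemmas supply sharp estimates, the argument closes immediately.
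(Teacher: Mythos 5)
Your proposal is correct and is essentially identical to the paper's own argument: the paper obtains the proposition in exactly this way, by combining the variance asymptotics of Lemma \ref{lem:varhl4=4thmoment=logl/l^2} and the cumulant bound of Lemma \ref{lem:cum bnd q=4} with the Stein--Malliavin estimate \eqref{noupebou2}, so that $cum_{4}(h_{l;4})/Var^{2}(h_{l;4})\approx (\log l)^{-2}$ and the square root gives the rate $O(1/\log l)$. The cancellation of the powers of $l$ and the pairing of the upper cumulant bound with the sharp (two-sided) variance asymptotics, which you spell out, are precisely what the paper leaves implicit in its one-line derivation.
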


\section{Applications \label{applications}}

\subsection{Polyspectra for spherical random fields}

Let $T(x)$ be a zero-mean Gaussian and isotropic spherical random field,
i.e. a measurable application $T:\mathcal{S}^{2}\times \Omega \rightarrow
\mathbb{R}$ such that $T(x)\overset{d}{=}T(gx)$ for all elements of the
group of rotations $g\in SO(3).$ It is well-known that the following
mean-square representation holds, in the $L^{2}(dx\times dP)$ sense (see
\cite{MaPeCUP}, Chapter 5):
\begin{equation*}
T(x)=\sum_{l}T_{l}(x)\text{ , where }\Delta _{\mathbb{S}%
^{2}}T_{l}=-l(l+1)T_{l}\text{ .}
\end{equation*}%
We can hence view the eigenfunctions $f_{l}$ as the normalized Fourier
components of such spherical field, e.g. $f_{l}(x):=T_{l}(x)/\sqrt{\mathbb{E}%
\left[ T_{l}^{2}(x)\right] }.$ In this subsection, we shall consider the
central limit theorem for polynomial functionals of the form
\begin{equation*}
Z_{l}=\sum_{q=0}^{Q}b_{q}\int_{S^{2}}\left\{ f_{l}(x)\right\} ^{q}dx\text{ ,
for some }Q\in \mathbb{N}\text{ , }b_{q}\in \mathbb{R}\text{ .}
\end{equation*}%
When we view the eigenfunctions $f_{l}$ as the Fourier components of an
isotropic spherical random field, these polynomial statistics cover, for
instance, the well-known (moment and cumulant)\emph{\ polyspectra} of the
random field. These are the crucial statistics when searching, for instance,
for possible non-Gaussian behaviour in $T(x);$ see for instance \cite%
{bartolo}, \cite{lewis}, and the references therein. Note that there exist
deterministic coefficients $\beta _{0},...,\beta _{p}$ such that we can
write
\begin{equation*}
Z_{l}=\sum_{q=0}^{Q}\beta
_{q}\int_{S^{2}}H_{q}(f_{2l}(x))dx=\sum_{q=0}^{Q}\beta _{q}h_{2l;q}\text{ .}
\end{equation*}%
From the results in the previous Section, we have immediately the following

\begin{corollary}
Assume that $c_{q}>0$ for at least one $q$ such that $\beta _{q}\neq 0.$
Then
\begin{equation*}
\frac{Z_{l}-\mathbb{E}\left[ Z_{l}\right] }{\sqrt{Var(Z_{l})}}\rightarrow
_{d}\mathcal{N}(0,1)\text{ , as }l\rightarrow \infty \text{ .}
\end{equation*}
\end{corollary}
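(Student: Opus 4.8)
The plan is to express $Z_l-\E[Z_l]$ as a finite linear combination of the building blocks $h_{2l;q}$, each of which lives in a single, distinct Wiener chaos, and then to upgrade the marginal central limit theorems of Theorem \ref{thm:hlq asymp gauss} to a joint one by means of the multivariate fourth-moment theorem of Peccati--Tudor (see \cite{noupebook}, \cite{peccatitaqqu}). First I would remove the trivial contributions. Since $\E[H_q(f_{2l}(x))]=0$ for $q\ge 1$ while $H_0\equiv 1$, one has $\E[Z_l]=4\pi\beta_0$ and therefore $Z_l-\E[Z_l]=\sum_{q=1}^{Q}\beta_q h_{2l;q}$; moreover $h_{2l;1}=\int_{\mathbb S^2}f_{2l}(x)\,dx=0$ identically, because $\int_{\mathbb S^2}Y_{lm}=0$ for $l\ge 1$, so in fact $Z_l-\E[Z_l]=\sum_{q=2}^{Q}\beta_q h_{2l;q}$. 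As $h_{2l;q}$ belongs to the $q$-th chaos, distinct $h_{2l;q}$ are mutually orthogonal, which gives both the variance decomposition $Var(Z_l)=\sum_{q=2}^{Q}\beta_q^2\,Var(h_{2l;q})$ and the exact relations $\E[h_{2l;q}h_{2l;q'}]=0$ for $q\neq q'$.

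Next I would single out the dominant terms by comparing the orders of the variances recalled in the Introduction and in Lemma \ref{lem:varhl4=4thmoment=logl/l^2}: namely $Var(h_{2l;q})\approx 1/l$ for $q=2$, $\approx\log l/l^2$ for $q=4$, of order $1/l^2$ for $q=3$ and for every $q\ge 5$ with $c_q>0$, and of smaller order $o(1/l^2)$ for the (conjecturally empty) odd $q\ge 5$ with $c_q=0$. Writing $\sigma_q(l)=\sqrt{Var(h_{2l;q})}$ (and discarding any identically vanishing term), $\sigma(l)=\sqrt{Var(Z_l)}$, and $a_q(l)=\beta_q\sigma_q(l)/\sigma(l)$, one has $(Z_l-\E[Z_l])/\sigma(l)=\sum_{q}a_q(l)\,h_{2l;q}/\sigma_q(l)$ with $\sum_q a_q(l)^2=1$. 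The hypothesis that some index $q_0$ has $\beta_{q_0}\neq 0$ and $c_{q_0}>0$ forces $\sigma(l)$ to be of order at least $1/l$, whence any term with $c_q=0$ obeys $a_q(l)\to 0$ and is asymptotically irrelevant. Letting $S$ be the set of indices attaining the maximal variance order, this shows $S\subseteq\{q:c_q>0\}$ and $S\neq\emptyset$, so Theorem \ref{thm:hlq asymp gauss} (together with the $q=4$ Proposition) applies to every $q\in S$: the normalised variables $\tilde h_q:=h_{2l;q}/\sigma_q(l)\to_d\mathcal N(0,1)$, and the coefficients satisfy $a_q(l)\to a_q$ with $\sum_{q\in S}a_q^2=1$.

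The heart of the argument is to deduce the limit of the linear combination $\sum_{q}a_q(l)\tilde h_q$ from these marginal limits, and this is the step I expect to be the main obstacle: componentwise convergence to Gaussians does not in general entail joint convergence. Here the single-chaos structure saves the day. The vector $(\tilde h_q)_{q\in S}$ has exactly identity covariance, $\E[\tilde h_q\tilde h_{q'}]=\delta_q^{q'}$, so its covariance converges trivially; combined with the marginal convergences $\tilde h_q\to_d\mathcal N(0,1)$, the Peccati--Tudor theorem yields the joint convergence $(\tilde h_q)_{q\in S}\to_d\mathcal N(0,I)$ to a Gaussian vector with \emph{independent} components. Consequently $\sum_{q\in S}a_q\tilde h_q\to_d\mathcal N\!\left(0,\sum_{q\in S}a_q^2\right)=\mathcal N(0,1)$ by the continuous mapping theorem.

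Finally I would collect the pieces with a routine Slutsky argument. The contribution of the indices outside $S$ has $L^2$-norm $\E[(\sum_{q\notin S}a_q(l)\tilde h_q)^2]=\sum_{q\notin S}a_q(l)^2\to 0$ by orthogonality, hence tends to $0$ in probability; and $\sum_{q\in S}a_q(l)\tilde h_q=\sum_{q\in S}a_q\tilde h_q+\sum_{q\in S}(a_q(l)-a_q)\tilde h_q$, where the first term converges to $\mathcal N(0,1)$ by the previous paragraph and the second tends to $0$ in probability since its coefficients vanish while the $\tilde h_q$ are bounded in $L^2$. Combining these gives $(Z_l-\E[Z_l])/\sigma(l)\to_d\mathcal N(0,1)$, which is the assertion of the corollary.
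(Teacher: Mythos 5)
Your proof is correct and follows essentially the same route as the paper: both reduce the claim to the fact that $Z_l-\mathbb{E}[Z_l]$ is a finite linear combination of the single-chaos components $h_{2l;q}$, and both invoke the Peccati--Tudor multivariate CLT (the paper cites it directly) to upgrade the marginal convergences of Theorem \ref{thm:hlq asymp gauss} to joint Gaussianity with independent limit components. The paper's proof is two lines; your additional bookkeeping (dropping the $q=0,1$ terms, discarding the $c_q=0$ terms via the variance orders, and the final Slutsky step) simply makes explicit what the paper leaves implicit.
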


The proof is immediate in light of Theorem \ref{thm:hlq asymp gauss}. Indeed, we are
dealing here with a finite linear combination of asymptotically Gaussian
random variables, and we recall that for random vectors with components in
Wiener chaoses, the multivariate Central Limit Theorem follows from
convergence in distribution of the univariate components, see \cite%
{peccatitudor}. This result thus extends the Central Limit Theorem provided
in \cite{M2008} for the sequence $\left\{ h_{2l;3}\right\} $ to polyspectra
of arbitrary orders.

It is actually possible to establish stronger results, i.e. to study the
rates of convergence in the total variation bound. Rather than focusing on
this issue, we move to the Central Limit Theorem for the case of a more
general, infinite-order $L^{2}$ expansion, as it is the case for the
\emph{Defect}.

\subsection{Defect}

\label{sec:defect}

In this subsection, we shall focus on one of the most important geometric
functionals, namely the Defect. The Defect (or \textquotedblleft
signed area\textquotedblright , see \cite{BGS}) of a function $\psi :\mathbb{%
\ S}^{2}\rightarrow \mathbb{R}$ is defined as
\begin{equation}
\mathcal{D}(\psi ):=\mathrm{meas}\left( \psi ^{-1}(0,\infty )\right) -%
\mathrm{\ meas}\left( \psi ^{-1}(-\infty ,0)\right) =\int_{\mathbb{S}^{2}}%
\mathcal{H}(\psi (x))dx.  \label{eq:defect def}
\end{equation}%
Here $\mathcal{H}(t)$ is such that
\begin{equation}
\mathcal{H}(t)=\mathds{1}_{[0,\infty )}(t)-\mathds{1}_{(-\infty ,0]}(t)=%
\begin{cases}
1 & t>0 \\
-1 & t<0 \\
0 & t=0%
\end{cases}%
,  \label{eq:heaviside def}
\end{equation}%
where $\mathds{1}_{A}(t)$ is the usual indicator function of the set $A$,
and $dx$ is the Lebesgue measure. In our case, the Defect is the difference
between the areas of positive and negative inverse image of $f_{l} $, denoted
\begin{equation*}
\mathcal{D}_{l}:=\mathcal{D}(f_{l}).
\end{equation*}

It has been shown by \cite{MaWi} that the following expansion holds, in the $%
L^{2}(dP)$ sense%
\begin{equation*}
\mathcal{D}_{l}=\sum_{q=1}^{\infty }\frac{J_{2q+1}}{(2q+1)!}%
h_{l;2q+1}=\sum_{q=1}^{\infty }\frac{(-1)}{\sqrt{2\pi }}\frac{(2q-1)!!}{%
(2q+1)!}h_{l;2q+1}\text{ .}
\end{equation*}%
Trivially $\mathbb{E}\left[ \mathcal{D}_{l}\right] =0$, and from \cite%
{MaWi2} we have that%
\begin{equation*}
Var(\mathcal{D}_{l})=\mathbb{E}\left[ \mathcal{D}_{l}^{2}\right] \sim
\sum_{q=1}^{\infty }a_{q}\frac{c_{2q+1}}{l^{2}}+o(l^{-2})\text{ , }a_{q}=%
\frac{(2q)!}{4^{q}(q!)^{2}(2q+1)}
\end{equation*}
are the (suitably normalized) Taylor coefficients of $\arcsin$ are asymptotic
to
\begin{equation*}
a_{q}=\frac{1}{2\sqrt{\pi }q^{3/2}}+o(q^{-3/2})\text{ , as }q\rightarrow
\infty
\end{equation*}%
by Stirling's formula, and
\begin{equation*}
\sum_{q=1}^{\infty }a_{q}c_{2q+1}>\frac{32}{\sqrt{27}}\text{ .}
\end{equation*}%
Note that we know $c_{3}>0$ from (\ref{var3}); any term corresponding to $%
c_{2q+1}=0$ can simply be dropped from the expansion, so the rate for this
variance is precise. In view of Theorem \ref{thm:hlq asymp gauss}, it is then
not difficult to prove the following result.

\begin{corollary}
As $l\rightarrow \infty$, we have
\begin{equation*}
\frac{\mathcal{D}_{2l}}{\sqrt{Var(\mathcal{D}_{2l})}}\rightarrow _{d}%
\mathcal{N}(0,1)\text{.}
\end{equation*}
\end{corollary}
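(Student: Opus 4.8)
The plan is to derive the Central Limit Theorem for the infinite-order functional $\mathcal{D}_{2l}$ from the finite-order case already available, via a standard truncation-and-approximation scheme. For $Q\in\mathbb{N}$ I would introduce the truncated Defect and its remainder,
\begin{equation*}
\mathcal{D}_l^{(Q)}:=\sum_{q=1}^{Q}\beta_{2q+1}\,h_{l;2q+1},\qquad \beta_{2q+1}:=\frac{J_{2q+1}}{(2q+1)!},\qquad R_l^{(Q)}:=\mathcal{D}_l-\mathcal{D}_l^{(Q)},
\end{equation*}
and set $\sigma_l^2:=Var(\mathcal{D}_{2l})$, $\sigma_{l,Q}^2:=Var(\mathcal{D}_{2l}^{(Q)})$. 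Since the summands $h_{2l;2q+1}$ lie in pairwise distinct Wiener chaoses they are mutually orthogonal in $L^2(dP)$, so that $\sigma_l^2=\sigma_{l,Q}^2+Var(R_{2l}^{(Q)})$ and, all variables being centred, $\mathbb{E}\big[(\mathcal{D}_{2l}-\mathcal{D}_{2l}^{(Q)})^2\big]=Var(R_{2l}^{(Q)})$. The whole argument then rests on the classical approximation criterion for convergence in distribution (see e.g. \cite{peccatitaqqu}): if $\mathcal{D}_{2l}^{(Q)}/\sigma_l\rightarrow_d\mathcal{N}(0,\rho_Q)$ as $l\to\infty$ for each fixed $Q$, if $\rho_Q\to1$ as $Q\to\infty$, and if
\begin{equation*}
\lim_{Q\to\infty}\limsup_{l\to\infty}\frac{Var(R_{2l}^{(Q)})}{\sigma_l^2}=0,
\end{equation*}
then $\mathcal{D}_{2l}/\sigma_l\rightarrow_d\mathcal{N}(0,1)$, which is the assertion (the last condition is exactly what bounds $\Pr(|\mathcal{D}_{2l}-\mathcal{D}_{2l}^{(Q)}|\ge\varepsilon\sigma_l)$ through Chebyshev).

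The first ingredient is immediate from what precedes. For fixed $Q$ the truncation $\mathcal{D}_{2l}^{(Q)}=\sum_{q=1}^{Q}\beta_{2q+1}h_{2l;2q+1}$ is a fixed finite linear combination of Hermite functionals, precisely of the type treated by the Corollary on polyspectra; since $c_3>0$ by \eqref{var3} and the corresponding coefficient $\beta_3=J_3/3!\neq0$, the hypothesis of that corollary holds for every $Q\ge1$, whence $\mathcal{D}_{2l}^{(Q)}/\sigma_{l,Q}\rightarrow_d\mathcal{N}(0,1)$. At the level of Theorem \ref{thm:hlq asymp gauss} this is the multivariate statement for the vector $(h_{2l;3},\ldots,h_{2l;2Q+1})$, whose components sit in distinct chaoses, so that joint asymptotic normality with diagonal limiting covariance follows from the marginal CLTs by the Peccati--Tudor theorem \cite{peccatitudor}. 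Combining this with the deterministic ratio $\sigma_{l,Q}/\sigma_l$ through Slutsky's lemma yields $\mathcal{D}_{2l}^{(Q)}/\sigma_l\rightarrow_d\mathcal{N}(0,\rho_Q)$, once I check that $\rho_Q:=\lim_{l\to\infty}\sigma_{l,Q}^2/\sigma_l^2$ exists and tends to $1$.

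The existence of $\rho_Q$ and the tail estimate are what remain, and here I would lean on the variance asymptotics imported from \cite{MaWi2}. By chaos orthogonality and the known behaviour $Var(h_{2l;2q+1})\sim(4\pi)^2(2q+1)!\,c_{2q+1}/l^2$, one gets $l^2\sigma_{l,Q}^2\to K_Q$ and $l^2\sigma_l^2\to K_\infty\in(0,\infty)$ with $K_Q\propto\sum_{q=1}^{Q}a_qc_{2q+1}$, so that $\rho_Q=K_Q/K_\infty\uparrow1$ as $Q\to\infty$ (the series converges because $a_q=O(q^{-3/2})$ and the $c_{2q+1}$ are bounded, and any term with $c_{2q+1}=0$ is simply dropped). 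The delicate point, and the main obstacle, is the exchange of the limits $l\to\infty$ and $Q\to\infty$ in the tail estimate, i.e. justifying $\limsup_{l\to\infty}l^2\,Var(R_{2l}^{(Q)})\to0$ as $Q\to\infty$; this needs a bound on $\int_0^1 P_{2l}(t)^{2q+1}\,dt$ of the correct order $l^{-2}$ that is uniform in $q$. Crude absolute-value estimates will not suffice here: already $\int_0^1|P_{2l}|^{3}\,dt\asymp l^{-3/2}$ loses the oscillatory cancellation and is of larger order than the true $\int_0^1 P_{2l}^{3}\,dt\sim c_3/l^2$, so one must invoke the finer computation of \cite{MaWi2}, where $Var(\mathcal{D}_{2l})\sim\sum_q a_q c_{2q+1}/l^2$ is established with a genuinely negligible remainder (so that the rate is precise). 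Granting this uniform control, the tail $\sum_{q>Q}a_qc_{2q+1}\to0$ by convergence of the series, all three hypotheses of the approximation criterion are met, and the Corollary follows.
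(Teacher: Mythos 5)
Your proposal is correct and takes essentially the same route as the paper's own proof: truncate the Hermite expansion, get the CLT for the fixed-order truncation from the polyspectra corollary (Theorem \ref{thm:hlq asymp gauss} plus the Peccati--Tudor multivariate argument, using $c_{3}>0$), control the tail variance uniformly in $l$ via the estimates of \cite{MaWi2}, and conclude with the standard approximation criterion for Gaussian subordinated sequences (the paper cites \cite{Sodin&Tsirelson} for this). The only difference is bookkeeping: the paper bounds the $L^{2}$ distance between the two \emph{normalized} quantities directly, obtaining an explicit $O(m^{-1/2})$ rate, whereas you split the same content into the three conditions $\rho_{Q}\to 1$, Slutsky, and the tail-variance ratio, which is an equivalent formulation.
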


\begin{proof}
The proof follows a standard argument for nonlinear transforms of Gaussian
measures, see for instance \cite{Sodin&Tsirelson}. Define
\begin{equation*}
\mathcal{D}_{l;m}:=\sum_{q=1}^{m}\frac{J_{2q+1}}{(2q+1)!}h_{l;2q+1}\text{ ;}
\end{equation*}%
using the trivial inequality $\mathbb{E}\left[ (A-B)^{2}\right] \leq 2%
\mathbb{E}\left[ (A-C)^{2}\right] +2\mathbb{E}\left[ (C-B)^{2}\right] ,$ we
have that%
\begin{eqnarray*}
&&\mathbb{E}\left[ \left( \frac{\mathcal{D}_{l}}{\sqrt{Var(\mathcal{D}_{l})}}%
-\frac{\mathcal{D}_{l;m}}{\sqrt{Var(\mathcal{D}_{l;m})}}\right) ^{2}\right]
\\
&\leq &2\mathbb{E}\left[ \left( \frac{\mathcal{D}_{l}}{\sqrt{Var(\mathcal{D}%
_{l})}}-\frac{\mathcal{D}_{l;m}}{\sqrt{Var(\mathcal{D}_{l})}}\right) ^{2}%
\right] +2\mathbb{\ E}\left[ \left( \frac{\mathcal{D}_{l;m}}{\sqrt{Var(%
\mathcal{D}_{l})}}-\frac{\mathcal{D}_{l;m}}{\sqrt{Var(\mathcal{D}_{l;m})}}%
\right) ^{2}\right] \\
&\leq &\frac{2}{Var(\mathcal{D}_{l})}\mathbb{E}\left[ (\mathcal{D}_{l}-%
\mathcal{D}_{l;m})^{2}\right] +2\left( \frac{Var(\mathcal{D}_{l;m})}{Var(%
\mathcal{D}_{l})}+1-\frac{2\sqrt{Var(\mathcal{D}_{l;m})}}{\sqrt{Var(\mathcal{%
D}_{l})}}\right) \text{ .}
\end{eqnarray*}

Now, using the same argument as in the proof of Proposition 4.2 from \cite%
{MaWi2}, pages 9-10, we have that%
\begin{eqnarray*}
\mathbb{E}\left[ (\mathcal{D}_{l}-\mathcal{D}_{l;m})^{2}\right]
&=&\sum_{q=m}^{\infty }\left\{ \frac{J_{2q+1}}{(2q+1)!}\right\} ^{2}\mathbb{E%
}\left[ h_{l;2q+1}^{2}\right] \\
&=&\frac{1}{l^{2}}\sum_{q=m}^{\infty }a_{q}c_{2q+1}+o(l^{-2}) \\
&\leq &\frac{1}{2\sqrt{\pi }}\frac{1}{l^{2}}\sum_{q=m}^{\infty }\frac{c_{5}}{%
q^{3/2}}+o(l^{-2})=O\left(\frac{1}{l^{2}\sqrt{m}}\right),
\end{eqnarray*}%
so that%
\begin{equation*}
\frac{2}{Var(\mathcal{D}_{l})}\mathbb{E}\left[ (\mathcal{D}_{l}-\mathcal{D}%
_{l;m})^{2}\right] =O\left(\frac{1}{\sqrt{m}}\right),
\end{equation*}%
and%
\begin{eqnarray*}
\frac{Var(\mathcal{D}_{l;m})}{Var(\mathcal{D}_{l})}+1-\frac{2\sqrt{Var(%
\mathcal{D}_{l;m})}}{\sqrt{Var(\mathcal{D}_{l})}} &=&2+O(\frac{1}{\sqrt{m}}%
)-2\sqrt{1+O\left(\frac{1}{\sqrt{m}}\right)} \\
&=&O\left(\frac{1}{\sqrt{m}}\right).
\end{eqnarray*}%
It follows immediately that
\begin{equation}
\label{2ago}
\mathbb{E}\left[ \left( \frac{\mathcal{D}_{l}}{\sqrt{Var(\mathcal{D}_{l})}}-
\frac{\mathcal{D}_{l;m}}{\sqrt{Var(\mathcal{D}_{l;m})}}\right) ^{2}\right]
=O\left(\frac{1}{\sqrt{m}}\right).
\end{equation}
Now, for every fixed $m$ we have
\begin{equation*}
\frac{\mathcal{D}_{l;m}}{\sqrt{Var(\mathcal{D}_{l;m})}}\rightarrow_{d}
\mathcal{N}(0,1)\text{ as }l\rightarrow \infty,
\end{equation*}
and since $m$ can be chosen arbitrarily large, the
random variables $\left\{ \frac{\mathcal{D}_{l}}{\sqrt{Var(\mathcal{D}_{l})}}
\right\} $ must have the same limit, bearing in mind \eqref{2ago} (see e.g. \cite{Sodin&Tsirelson}).
\end{proof}

\section{Appendix}

The Legendre polynomials are defined by Rodrigues' formula%
\begin{equation*}
P_{l}(t):=\frac{1}{2^{l}l!}\frac{d^{l}}{dt^{l}}(t^{2}-1)^{l}\text{ .}
\end{equation*}%
Legendre polynomials are orthogonal with respect to the constant weight $%
\omega (t)\equiv 1$ on $[-1,1]$, indeed%
\begin{equation}
\int_{-1}^{1}P_{l_{1}}(t)P_{l_{2}}(t)dt=\frac{2\delta _{l_{1}}^{l_{2}}}{%
2l_{1}+1}\text{ ;}  \label{legorth}
\end{equation}%
they also satisfy the well-known Hilb's asymptotics (see e.g. \cite{szego},
formula (8.21.17) on page 197):
\begin{equation}
P_{l}(\cos {\theta })=\left( \frac{\theta }{\sin {\theta }}\right)
^{1/2}J_{0}((l+1/2)\theta )+\delta (\theta )\text{ },
\label{eq:Hilb asymptotics}
\end{equation}%
where $J_{0}$ is the standard Bessel function, and the error term satisfies
\begin{equation*}
\delta (\theta )\ll
\begin{cases}
\theta ^{1/2}O(l^{-3/2})\text{ }, & cl^{-1}<\theta <\pi /2 \\
\theta ^{2}\text{ }, & 0<\theta <cl^{-1}.%
\end{cases}%
\end{equation*}%
In particular, for $\theta \in \left[ 0,\frac{\pi }{2}\right] $,
\begin{equation}
P_{l}(\cos {\theta })\ll \frac{1}{\sqrt{l\theta }}.
\label{eq:Legendre decay sqrt}
\end{equation}

Let us now review briefly some notation on Wigner's $3j$ coefficients; see
\cite{VIK}, \cite{VMK} and \cite{BieLou} for a much more detailed
discussion, in particular concerning the relationships with the quantum
theory of angular momentum and group representation properties of $SO(3).$
We start from the analytic expression (valid for $m_{1}+m_{2}+m_{3}=0,$ see
\cite{VMK}, expression (8.2.1.5))
\begin{align*}
\left(
\begin{array}{ccc}
l_{1} & l_{2} & l_{3} \\
m_{1} & m_{2} & m_{3}%
\end{array}%
\right) & :=(-1)^{l_{1}+m_{1}}\sqrt{2l_{3}+1}\left[ \frac{%
(l_{1}+l_{2}-l_{3})!(l_{1}-l_{2}+l_{3})!(l_{1}-l_{2}+l_{3})!}{%
(l_{1}+l_{2}+l_{3}+1)!}\right] ^{1/2} \\
& \times \left[ \frac{(l_{3}+m_{3})!(l_{3}-m_{3})!}{%
(l_{1}+m_{1})!(l_{1}-m_{1})!(l_{2}+m_{2})!(l_{2}-m_{2})!}\right] ^{1/2} \\
& \times \sum_{z}\frac{(-1)^{z}(l_{2}+l_{3}+m_{1}-z)!(l_{1}-m_{1}+z)!}{%
z!(l_{2}+l_{3}-l_{1}-z)!(l_{3}+m_{3}-z)!(l_{1}-l_{2}-m_{3}+z)!}\text{,}
\end{align*}%
where the summation runs over all $z$'s such that the factorials are
non-negative. This expression becomes much neater for $m_{1}=m_{2}=m_{3}=0,$
where we have%
\begin{equation*}
\left(
\begin{array}{ccc}
l_{1} & l_{2} & l_{3} \\
0 & 0 & 0%
\end{array}%
\right) =
\end{equation*}%
\begin{equation}
\left\{
\begin{array}{c}
0\text{ , for }l_{1}+l_{2}+l_{3}\text{ odd} \\
(-1)^{\frac{l_{1}+l_{2}-l_{3}}{2}}\frac{\left[ (l_{1}+l_{2}+l_{3})/2\right] !%
}{\left[ (l_{1}+l_{2}-l_{3})/2\right] !\left[ (l_{1}-l_{2}+l_{3})/2\right] !%
\left[ (-l_{1}+l_{2}+l_{3})/2\right] !}\left\{ \frac{%
(l_{1}+l_{2}-l_{3})!(l_{1}-l_{2}+l_{3})!(-l_{1}+l_{2}+l_{3})!}{%
(l_{1}+l_{2}+l_{3}+1)!}\right\} ^{1/2}\text{ } \\
\text{for }l_{1}+l_{2}+l_{3}\text{ even}%
\end{array}%
\right. .  \label{appe}
\end{equation}%
Some of the properties to follow become neater when expressed in terms of
the so-called Clebsch-Gordan coefficients, which are defined by the
identities (see \cite{VMK}, Chapter 8)%
\begin{equation}
\left(
\begin{array}{ccc}
l_{1} & l_{2} & l_{3} \\
m_{1} & m_{2} & -m_{3}%
\end{array}%
\right) =(-1)^{l_{3}+m_{3}}\frac{1}{\sqrt{2l_{3}+1}}%
C_{l_{1}-m_{1}l_{2}-m_{2}}^{l_{3}m_{3}}  \label{clewig1}
\end{equation}%
\begin{equation}
C_{l_{1}m_{1}l_{2}m_{2}}^{l_{3}m_{3}}=(-1)^{l_{1}-l_{2}+m_{3}}\sqrt{2l_{3}+1}%
\left(
\begin{array}{ccc}
l_{1} & l_{2} & l_{3} \\
m_{1} & m_{2} & -m_{3}%
\end{array}%
\right) \text{.}  \label{clewig2}
\end{equation}%
We have the following orthonormality conditions:
\begin{eqnarray}
\sum_{m_{1},m_{2}}C_{l_{1}m_{1}l_{2}m_{2}}^{lm}C_{l_{1}m_{1}l_{2}m_{2}}^{l^{%
\prime }m^{\prime }} &=&\delta _{l}^{l\prime }\delta _{m}^{m\prime },
\label{ortho1} \\
\sum_{l,m}C_{l_{1}m_{1}l_{2}m_{2}}^{lm}C_{l_{1}m_{1}^{\prime
}l_{2}m_{2}^{\prime }}^{lm} &=&\delta _{m_{1}}^{m_{1}^{\prime }}\delta
_{m_{2}}^{m_{2}^{\prime }}\text{.}  \label{ortho2}
\end{eqnarray}%
Now recall the general formula (\cite{VMK}, eqs. 5.6.2.12-13, or \cite%
{MaPeCUP}, eqs 3.64 and 6.46)
\begin{equation*}
\int_{S^{2}}Y_{l_{1}m_{1}}(x)...Y_{l_{n}m_{n}}(x)dx
\end{equation*}%
\begin{eqnarray}
&=&\sqrt{\frac{4\pi }{2l_{n}+1}}\sum_{L_{1}...L_{n-3}}\sum_{M_{1}...M_{n-3}}%
\left[
C_{l_{1}m_{1}l_{2}m_{2}}^{L_{1}M_{1}}C_{L_{1}M_{1}l_{3}m_{3}}^{L_{2}M_{2}}...C_{L_{n-3}M_{n-3}l_{n-1}m_{n-1}}^{l_{n},-m_{n}}\right.
\notag \\
&&\times \left. \sqrt{\frac{\prod_{i=1}^{n-1}(2l_{i}+1)}{(4\pi )^{n-1}}}%
\left\{
C_{l_{1}0l_{2}0}^{L_{1}0}C_{L_{1}0l_{3}0}^{L_{2}0}...C_{L_{n-3}0l_{n-1}0}^{l_{n}0}\right\} %
\right] \text{.}  \label{gaunt}
\end{eqnarray}%
Two important special cases are provided by
\begin{equation*}
\int_{\mathbb{S}^{2}}Y_{lm_{1}}(x)Y_{lm_{2}}(x)Y_{lm_{3}}(x)dx
\end{equation*}%
\begin{equation*}
=(-1)^{l-m_{3}}\sqrt{\frac{(2l+1)}{4\pi }}%
C_{l0l0}^{l0}C_{lm_{1}lm_{2}}^{l-m_{3}}
\end{equation*}%
\begin{equation}
=\sqrt{\frac{(2l+1)^{3}}{4\pi }}\left(
\begin{array}{ccc}
l & l & l \\
0 & 0 & 0%
\end{array}%
\right) \left(
\begin{array}{ccc}
l & l & l \\
m_{1} & m_{2} & m_{3}%
\end{array}%
\right) \text{ ,}  \label{gaunt3}
\end{equation}%
\ and%
\begin{equation*}
\int_{\mathbb{S}^{2}}Y_{lm_{1}}(x)Y_{lm_{2}}(x)Y_{lm_{3}}(x)Y_{lm_{4}}(x)dx
\end{equation*}%
\begin{equation*}
=\frac{(2l+1)}{\sqrt{4\pi }}\sum_{L}(-1)^{L-M}\left\{ C_{l0l0}^{L0}\right\}
^{2}\frac{C_{lm_{1}lm_{2}}^{LM}C_{lm_{3}lm_{4}}^{L,-M}}{2L+1}
\end{equation*}%
\begin{equation}
=\sqrt{\frac{(2l+1)^{4}}{4\pi }}\sum_{L}(2L+1)\left(
\begin{array}{ccc}
l & l & L \\
0 & 0 & 0%
\end{array}%
\right) ^{2}\left(
\begin{array}{ccc}
l & l & L \\
m_{1} & m_{2} & M%
\end{array}%
\right) \left(
\begin{array}{ccc}
L & l & l \\
M & m_{3} & m_{4}%
\end{array}%
\right) \text{ .}  \label{gaunt4}
\end{equation}%
Similarly, the following identities hold:%
\begin{equation*}
\int_{0}^{1}P_{l}^{3}(t)dt=\left(
\begin{array}{ccc}
l & l & l \\
0 & 0 & 0%
\end{array}%
\right) ^{2},
\end{equation*}%
\begin{equation*}
\int_{0}^{1}P_{l}^{4}(t)dt=\sum_{L=0}^{2l}(2L+1)\left(
\begin{array}{ccc}
l & l & L \\
0 & 0 & 0%
\end{array}%
\right) ^{2}.
\end{equation*}%
Finally, we recall the useful identity, valid for all $x_{1},x_{2}\in
\mathbb{S}^{2}$
\begin{equation}
P_{l}(\left\langle x_{1},x_{2}\right\rangle )=\frac{4\pi }{2l+1}%
\sum_{m=-l}^{l}Y_{lm}(x_{1})\overline{Y}_{lm}(x_{2}),
\label{exploiting}
\end{equation}%
which allows to express Legendre polynomials in terms of spherical harmonics.

\end{document}